\documentclass[12pt]{amsart}

\usepackage{amsmath} 
\usepackage{amsfonts}
\usepackage{amssymb}
\usepackage{amsthm}
\usepackage{latexsym}
\usepackage{color}
\usepackage{enumerate}
\usepackage{mathrsfs}
\usepackage{setspace}
\usepackage{tikz}
\usepackage{mathtools}
\usepackage{amsbsy}
\usetikzlibrary{matrix}
\usepackage[bookmarksnumbered,colorlinks]{hyperref}

\textwidth 5.5in
\oddsidemargin .6in
\evensidemargin .6in

\setlength{\parindent}{.8em}
\setlength{\parskip}{.4em}

\usepackage{enumerate}
\def\br#1\er{\textcolor{red}{#1}} %
\hyphenation{Lo-ren-tzian}

\newcommand{\N}{{\mathbb N}}

\newcommand{\cambios}[1]{{\color{black} #1}} 

\newtheorem{thm}{Theorem}[section]
\newtheorem{prop}[thm]{Proposition}
\newtheorem{lemma}[thm]{Lemma}
\newtheorem{cor}[thm]{Corollary}

\newtheorem{defi}[thm]{Definition}

\newtheorem{rem}[thm]{Remark}

\def\LLS{(X,d,\ll,\le, \tau)}
\def\LLSY{(Y,\widetilde{d},\widetilde{\ll},\widetilde{\leq},\widetilde{\tau})}

\begin{document}

\title[On the causal hierarchy of Lorentzian length spaces]{On the causal hierarchy of  \\  Lorentzian length spaces}

\author[{Ak\'e Hau}]{Luis {Ak\'e Hau}}
\address{Facultad de Matem\'aticas. Universidad Aut\'onoma de Yucat\'an. Anillo Perif\'erico, Tablaje 13615, M\'erida, Mexico.}
\email{luisakehau@gmail.com}

\author[{Cabrera Pacheco}]{Armando J. {Cabrera Pacheco}}
\address{Fachbereich Mathematik, Universit\"at T\"ubingen, Auf der Morgenstelle 10, 72076. T\"ubingen, Germany.}
\email{cabrera@math.uni-tuebingen.de}

\author[Solis]{Didier A. Solis}
\address{Facultad de Matem\'aticas. Universidad Aut\'onoma de Yucat\'an. Anillo Perif\'erico, Tablaje 13615, M\'erida, Mexico.}
\email{didier.solis@correo.uady.mx }

\begin{abstract}
In this work we provide the full description of the upper levels of the classical causal ladder for spacetimes in the context of Lorenztian length spaces, thus establishing the hierarchy between them. We also show that global hyperbolicity, causal simplicity, causal continuity, stable causality and strong causality are preserved under distance homothetic maps.
\end{abstract}

\maketitle

\section{Introduction}
\thispagestyle{empty}

In the last few years there has been a great interest for exploring solutions to the Einstein Field Equations with low regularity. Boosted by the recent findings regarding the detection of gravitational waves \cite{LIGO} and finer observation of black holes \cite{BH1}, numerous attempts to model evolution of Einstein Field Equations with non-regular \cambios{initial} data have sparkled. Let us recall that the presence of matter in a relativistic model may lead to situations in which non-smooth or even discontinuous data might be taken into consideration. Such examples include for instance black hole merging or gravitational waves \cite{HE}. As a consequence, many different alternatives to the classical framework of smooth spacetimes (that is, time oriented smooth Lorentzian manifolds) have sprung in order to deal with specific situations. For instance, spacetimes with $C^0$ metrics \cite{CG,GL,GLS,SW}, singularity theorems in $C^{1,1}$ spacetimes \cite{GGKS,KSV} and the causal set approach to quantum gravity  \cite{ABP,BLMS,BS,SB2}, just to mention a few. 

On the other hand, in the realm of Riemannian geometry the theory of (geo\-de\-sic) length spaces was developed in the 1970's  to deal with non-smooth situations. In essence, a length space is a metric space where the distance between two points can be recovered by approximating the length of curves joining them   ---see the standard references \cite{BH,burago,Searcoid} for a detailed account on such spaces. In the particular case of spaces of bounded curvature,  numerous key results suggest that this is the right framework to deal with limits and convergence. A notable example in this regard is the celebrated Precompactness Theorem which establishes that the Gromov--Hausdorff limit of a sequence of Riemannian manifolds of sectional curvature bounded from below and uniformly bounded diameter is a (Riemannian) length space with lower curvature bounds, termed an Alexandrov space \cite{Gromov1,Gromov2}.

Inspired by the classical theory of Riemannian length spaces, Kunzinger and S\"amann developed the notion of Lorentzian length spaces. In their seminal work \cite{KS}, detailed constructions are carried out in order to recover most of the standard properties of (smooth) spacetimes. In particular, they established a basic causal ladder for Lorentzian length spaces, although they did not include some of the upper levels. 

Furthermore, there is not a clear analog for the Lorentzian metric in the context of Lorentzian length spaces. Since the causal structure of a smooth spacetime is a conformal invariant, the natural question arises as to explore the kind of maps between Lorentzian length spaces that preserve causal structure.

In this work we tackle the issues related to causality described in the previous paragraphs; thus providing definitions and establishing the hierarchy of the upper levels of the causal ladder. In doing so, we also improve the notion of local causality ---see Definition 3.4 in \cite{KS}--- in order to avoid inconsistencies with the classical causal ladder of smooth spacetimes.

Moreover we describe a class of transformations that preserve these upper levels. In this way we complement the work of Kunzinger and S\"{a}mann and provide a key step in the effort of developing a causal theory for Lorentzian length spaces.

This paper is organized as follows: in Section \ref{sec:preliminares} we introduce the basic theory of Lorentzian length spaces as developed in \cite{KS}. In Section \ref{sec:CH} we introduce three of the upper levels in the causal ladder,  namely, causally simple, causally continuous and stably causal \cambios{Lorentzian length spaces}; moreover, we establish the relations between them.  Finally, in Section \ref{sec:hom} we transfer the notion of homothetic distance maps to Lorentzian length spaces and show that all the levels of the causal ladder from strong causality up are preserved under these mappings.

\section{Preliminaries}\label{sec:preliminares}

In this section we establish the notation and describe basic results pertaining to Lorentzian length spaces, closely following the theory and notation developed in \cite{KS}. First, we extract the relevant properties that serve as foundations for the causal structure of a smooth spacetime. Such axiomatic approach has been pursued in different ways (refer for instance to \cite{carter} in the context of etiological spaces or \cite{BLMS} in the theory of causets).

\begin{defi}
A \emph{causal space} $(X,\ll ,\le )$ is a set $X$ endowed with two relations $\ll$ and $\le$ that satisfy
\begin{enumerate}
\item $x\ll y$ and $y\ll z$ $\Rightarrow$ $x\ll z$.
\item $x\le y$ and $y\le z$ $\Rightarrow$ $x\le z$.
\item $x\le x$.
\item $x\ll y\Rightarrow x\le y$. 
\end{enumerate}

Moreover, we will write $p<q$ if $p \leq q$ and $p \neq q$.   
\end{defi}

The above relations give rise to the standard building blocks of the causal structure \cambios{in the standard way}.
\begin{defi}
Let $(X,\ll ,\le )$ be a causal space. The \emph{chronological future} and \emph{chronological past} of $x\in X$ are the sets defined by
\begin{eqnarray*}
I^+(x) &=& \{ y\in X\mid x\ll y\} ,\\
I^-(x) &=&\{y\in X\mid y\ll x\},
\end{eqnarray*}
respectively. In a similar fashion, the \emph{causal future} and \emph{causal past} of $x$ are
\begin{eqnarray*}
J^+(x) &=& \{ y\in X\mid x\le y\} ,\\
J^-(x) &=&\{y\in X\mid y\le x\}.
\end{eqnarray*} 
\end{defi}

Sometimes it will be useful to write the relations $\ll$ and $\le$ as subsets of $X \times X$. For example,
\begin{equation*}
J^+=\{ (x,y) \in X \times X \mid x \leq y \}.
\end{equation*}

As in the case of smooth spacetimes, the chronological sets give rise to a topology in a causal space.
\begin{defi}
Let $(X,\ll ,\le )$ be a causal space. The \emph{Alexandrov topology} $\mathcal{A}$ in $X$ is the topology having as a  subbase the collection of chronological diamonds $I^+(x)\cap I^-(y)$.
\end{defi}

In addition to the causal structure, the second main ingredient to be considered in the definition of a Lorentzian length space is an analog of the Lorentzian distance function (time separation) in the smooth context. 
\begin{defi}
Let $(X, \ll ,\le)$ be a causal space and $d$ a metric distance in $X$. A function $\tau : X\times X\to [0,\infty ]$ that satisfies
\begin{enumerate}
\item $\tau$ is lower semicontinuous with respect to the topology induced by $d$.
\item $\tau (x,z)\ge \tau (x,y)+\tau (y,z)$ for all $x\le y\le z$.
\item $\tau (x,y)=0$ if $x\not\le y$.
\item $\tau (x,y)>0 \Leftrightarrow x\ll y$.
\end{enumerate}
is called a \emph{time separation} for $(X,\ll ,\le )$. 
\end{defi}

\begin{defi}
A \emph{Lorentzian pre-length space} $\LLS$ consists of a causal space $(X,\ll ,\le )$ endowed with a time separation $\tau$. 
\end{defi}

We use $d$ to furnish $X$ with a (metric) topology $\mathcal{D}$, henceforth any topological notion will be considered with respect to this topology, unless otherwise explicitly stated.

\begin{rem}
Recall that a smooth spacetime $(M,g)$ is always metrizable, so in the smooth scenario we can regard $d=d_M$ as the metric that induces the manifold topology, and hence $(M, d_M, \ll_g ,\le_g , \tau_g )$ is a Lorentzian pre-length space, where $\ll_g$, $\le_g$  are the usual chronological and causal relations and $\tau_g$ the standard Lorentzian distance function. 
\end{rem}

In a Lorentzian pre-length space there is just enough structure to deduce some of the basic causal properties we might expect for a non-smooth version of a spacetime. Here we mention a few.

\begin{lemma}[Push up, \cite{KS} Lemma 2.10]\label{lema:push}
Let $\LLS$ be a Lorentzian pre-length space and let $x\le y\ll z$ or $x\ll y\le z$. Then $x\ll z$.
\end{lemma}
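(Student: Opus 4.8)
The plan is to reduce everything to the characterization of chronology by positivity of the time separation, namely axiom (4) of the time separation: $x \ll z$ holds precisely when $\tau(x,z) > 0$. Thus it suffices to show $\tau(x,z) > 0$ in both cases, and the natural tool is the reverse triangle inequality, axiom (2), $\tau(x,z) \ge \tau(x,y) + \tau(y,z)$, which is valid whenever $x \le y \le z$.

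First I would verify that the hypotheses of the reverse triangle inequality are actually met, which is the only point requiring a moment's care. In the case $x \le y \ll z$, the chronological relation $y \ll z$ yields $y \le z$ by axiom (4) of the causal space, so the chain $x \le y \le z$ holds; symmetrically, in the case $x \ll y \le z$, the relation $x \ll y$ yields $x \le y$, again producing $x \le y \le z$. In either situation axiom (2) applies.

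Then I would estimate. In the first case, since $\tau$ takes values in $[0,\infty]$ we have $\tau(x,y) \ge 0$, while $y \ll z$ gives $\tau(y,z) > 0$ by axiom (4) of the time separation; hence $\tau(x,z) \ge \tau(x,y) + \tau(y,z) \ge \tau(y,z) > 0$. In the second case, symmetrically, $\tau(x,y) > 0$ and $\tau(y,z) \ge 0$, so $\tau(x,z) \ge \tau(x,y) > 0$. Applying axiom (4) once more, $\tau(x,z) > 0$ yields $x \ll z$, as claimed.

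I do not expect a genuine obstacle here: the only subtlety is remembering to promote the chronological hypothesis to a causal one (via the implication $\ll \Rightarrow \le$) so that the reverse triangle inequality is applicable; once the chain $x \le y \le z$ is in place, positivity of $\tau$ propagates immediately. The argument invokes neither the metric $d$ nor the topology, relying solely on the algebraic compatibility axioms relating $\tau$ to the relations $\ll$ and $\le$.
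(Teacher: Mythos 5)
Your proof is correct: promoting the chronological hypothesis to a causal one via $\ll\Rightarrow\le$, applying the reverse triangle inequality, and concluding from the positivity characterization $\tau(x,z)>0\Leftrightarrow x\ll z$ is exactly the standard argument. The paper itself does not prove this lemma but cites it from \cite{KS} (Lemma 2.10), where the proof is essentially the one you give, so there is nothing to add.
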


\begin{prop}[\cite{KS} Lemma 2.12 and Prop. 2.13]\label{prop:Iopen}
Let $\LLS$ be a Lorentzian pre-length space. The chronological sets $I^{\pm}(x)$ are open (in the topology induced by $d$). Furthermore, the relation $\ll$ is open (as a subset of $X\times X$).
\end{prop}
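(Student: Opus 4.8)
The plan is to obtain both openness assertions directly from the lower semicontinuity of $\tau$ (property (1) in the definition of time separation) combined with the equivalence $\tau(x,y)>0\Leftrightarrow x\ll y$ (property (4)). Neither the reverse triangle inequality nor the push-up lemma is needed here; the whole argument is the observation that each chronological object is the $\{\tau>0\}$ superlevel set of a lower semicontinuous function.

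First I would recall that a $[0,\infty]$-valued function $f$ is lower semicontinuous exactly when every strict superlevel set $\{f>a\}$ is open; in particular, taking $a=0$, the set $\{f>0\}$ is open. For the chronological sets, fix $x\in X$ and consider the slice map $z\mapsto\tau(x,z)$. This map is lower semicontinuous, being the composition of $\tau$ with the continuous inclusion $z\mapsto(x,z)$ of $X$ into $X\times X$. Property (4) then gives
\[
I^+(x)=\{z\in X:x\ll z\}=\{z\in X:\tau(x,z)>0\},
\]
which is precisely the superlevel set $\{\tau(x,\cdot)>0\}$ and hence open. The case of $I^-(x)$ is identical, using instead the slice $z\mapsto\tau(z,x)$, which is lower semicontinuous for the same reason.

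For the openness of the relation $\ll$, I would apply the same idea on the product $X\times X$ without fixing a coordinate. Lower semicontinuity of $\tau$ makes $\{(x,y):\tau(x,y)>0\}$ open, and property (4) identifies this set with
\[
I^+=\{(x,y)\in X\times X:x\ll y\},
\]
so $\ll$ is open as a subset of $X\times X$, as claimed.

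There is no substantial obstacle: the content of the proposition is entirely encoded in lower semicontinuity. The only two points meriting a little care are that the product $X\times X$ should be equipped with the metric topology induced by, say, $d\bigl((x,y),(x',y')\bigr)=d(x,x')+d(y,y')$, so that lower semicontinuity on the product is well defined and restricts correctly to each coordinate slice; and that the openness of $I^{\pm}(x)$ and of $\ll$ is simply the special case $a=0$ of the defining property of lower semicontinuity, applied to the slice maps and to $\tau$ itself, respectively.
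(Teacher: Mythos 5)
Your proof is correct and follows essentially the same route as the cited argument in [KS, Lemma 2.12 and Prop.\ 2.13]: both openness claims are immediate from the identification of $I^{\pm}(x)$ and $\ll$ with superlevel sets $\{\tau>0\}$ of the lower semicontinuous time separation, using property (4) of the definition. Your remarks about the product topology on $X\times X$ and the slice maps are exactly the right points of care, and no further input (push-up, reverse triangle inequality) is needed.
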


Lorentzian pre-length spaces owe its name to the fact that there is a distance inspired notion of length for causal curves.

\begin{defi}\label{defi:causalcurve}
Let $\LLS$ be a Lorentzian pre-length space. A \emph{future directed causal (timelike)} curve is a non-constant  Lipschitz continuous map $\gamma : I\subset \mathbb{R}\to (X,d)$ with the property that for all $s<t$ we have $\gamma (s)\le \gamma (t)$ ($\gamma (s)\ll \gamma (t)$). A future directed causal curve is called \emph{future null} if no two points on the curve are related with respect to $\ll$. Past directed causal, timelike and null curves are defined analogously.
\end{defi}

\begin{rem}
This notion of causality does not extend in general the common notion of causality for smooth spacetimes. For instance, consider the quotient of the two-dimensional Lorentz-Minkowski spacetime $\mathbb{L}^2$ under the identification $(x,t)\simeq (x,t+1)$. The resulting smooth spacetime $(M,g)$ is totally vicious, so every pair of points can be connected by a smooth future directed timelike curve. Thus, according to Definition \ref{defi:causalcurve}, the Lorentzian pre-length space $(M, d_M, \ll_g ,\le_g , \tau_g )$ does not have any null curve. On the other hand, the curve $\gamma :\mathbb{R} \to M$, $\gamma (s)=(s,1/2)$ though not causal in $(M,g)$, it is timelike in $(M, d_M, \ll_g ,\le_g , \tau_g )$. Nevertheless, the above discrepancy is due to the poor causal properties of the spacetime $(M,g)$. In fact, for strongly causal spacetimes the two notions of causality agree (refer to Lemma 2.21 in \cite{KS}).
\end{rem}

\begin{defi}
Let $\gamma :[a,b]\to X$ be a future directed causal curve in the Lorentzian pre-length space $\LLS$. We define the $\tau$\emph{-length of } $\gamma$ by
\begin{equation*}
L_\tau (\gamma )=\inf \left\{ \sum_{i=0}^{N-1}\tau (\gamma (t_i), \gamma (t_{i+1}))\right\}
\end{equation*}
where the infimum runs over all partitions $a=t_0<t_1<t_2\ldots <t_{N-1}<t_N=b$  of the interval $[a,b]$. In case $L_\tau (\gamma\vert_{s}^t )>0$ for all $a\le s<t\le b$ we say $\gamma$ is \emph{rectifiable}.
\end{defi}

As a direct consequence of the definitions, we have the following result that mimic the properties of a length structure in Riemannian length spaces \cite{burago}.

\begin{prop}[\cite{KS} Lemmas 2.25, 2.27 and 2.30]
 Let $\gamma :[a,b]\to X$ be a future directed causal curve in the Lorentzian pre-length space $\LLS$. Then
 \begin{enumerate}
 \item $L_\tau (\gamma\vert_{a}^b)= L_\tau (\gamma\vert_{a}^s )+L_\tau (\gamma\vert_{s}^b)$ for all $a<s<b$.
 \item $L_\tau (\gamma)$ is invariant under reparametrizations.
 \item If $\gamma$ is rectifiable then it is timelike.
 \end{enumerate}
\end{prop}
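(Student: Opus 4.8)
The plan is to derive all three statements from a single monotonicity observation: refining a partition never increases the corresponding sum of $\tau$-values. Indeed, if $a=t_0<\cdots<t_N=b$ is a partition and one inserts a parameter $t'$ with $t_i<t'<t_{i+1}$, then, since $\gamma$ is future directed causal, $\gamma(t_i)\le\gamma(t')\le\gamma(t_{i+1})$, so axiom (2) of a time separation (the reverse triangle inequality) yields $\tau(\gamma(t_i),\gamma(t_{i+1}))\ge\tau(\gamma(t_i),\gamma(t'))+\tau(\gamma(t'),\gamma(t_{i+1}))$. Iterating, any refinement of a partition produces a sum no larger than the original one. I record two consequences to be used below: the coarsest partition $\{a,b\}$ dominates every other, so $\tau(\gamma(a),\gamma(b))\ge L_\tau(\gamma)$; and in computing $L_\tau(\gamma)$ the infimum may be restricted to those partitions containing any prescribed finite set of parameters.

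For (1) I would apply the second consequence with the prescribed parameter $s$, writing $L_\tau(\gamma\vert_a^b)$ as an infimum over partitions of $[a,b]$ containing $s$. Every such partition is the union of a partition of $[a,s]$ and a partition of $[s,b]$, and its sum splits correspondingly into a part depending only on $[a,s]$ and a part depending only on $[s,b]$. The inequality $L_\tau(\gamma\vert_a^b)\ge L_\tau(\gamma\vert_a^s)+L_\tau(\gamma\vert_s^b)$ then follows since the infimum of a sum is at least the sum of the infima of its summands, while the reverse inequality follows by gluing an arbitrary partition of $[a,s]$ to an arbitrary one of $[s,b]$ into a single partition of $[a,b]$ realizing the sum of the two pieces.

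Statement (2) needs no use of the reverse triangle inequality: a strictly increasing reparametrization $\phi$ carries partitions of one parameter interval bijectively onto partitions of the other, and since $(\gamma\circ\phi)(s_i)=\gamma(\phi(s_i))$ this bijection preserves every $\tau$-value termwise; hence the two families of sums, and therefore their infima, coincide. For (3), I would assume $\gamma$ is rectifiable and fix $s<t$; the first consequence above applied to $\gamma\vert_s^t$ gives $\tau(\gamma(s),\gamma(t))\ge L_\tau(\gamma\vert_s^t)>0$, so $\gamma(s)\ll\gamma(t)$ by axiom (4) of a time separation, and since $s<t$ were arbitrary, $\gamma$ is timelike.

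The only genuinely substantive point is the refinement-monotonicity observation together with its careful deployment in the infimum-splitting of (1); statements (2) and (3) are then essentially bookkeeping. The main thing to watch is that every $\tau$-value occurring in any of these sums is indexed by a causally related pair, so that axioms (2) and (4) actually apply --- this is guaranteed precisely by $\gamma$ being a future directed causal curve, for which $\gamma(s)\le\gamma(t)$ whenever $s\le t$.
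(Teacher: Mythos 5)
Your proof is correct. The paper offers no proof of its own for this proposition---it is imported verbatim from \cite{KS} (Lemmas 2.25, 2.27 and 2.30)---and your argument, resting on the single observation that refining a partition cannot increase the sum of $\tau$-values (an immediate consequence of the reverse triangle inequality applied along the causal curve), is precisely the standard one used there; the splitting of the infimum in (1), the partition bijection in (2), and the estimate $\tau(\gamma(s),\gamma(t))\ge L_\tau(\gamma\vert_s^t)>0$ in (3) are all deployed correctly.
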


At this point it is worthwhile noticing that the causal (chronological) relation is not a priori related to causal (chronological) connectivity. Thus the following definition is needed.

\begin{defi}
A Lorentzian pre-length space $(M,d,\ll ,\le , \tau )$ is called \emph{causally path connected} if for every pair of points with $x\le y$ ($x\ll y$) there exists a future directed causal (timelike) curve joining them.
\end{defi}

We now turn our attention to another of the key features of causality in the smooth settings: the fact that geometry and causality when restricted to a (convex) normal neighborhood $U$  are much less intricate that in the large. This fact in turn relies on two important properties of convex neighborhoods: (i) the local time separation $\tau_U$ is smooth, and (ii) the local causal relation $\le_U$ is closed (refer to \cite{BE} Lemma 4.26). We can recover both properties in the context of Lorentzian length spaces according to the following definitions.

\begin{defi}
\label{localizing}
A Lorentzian pre-length space $\LLS$ is \emph{localizable} if for each $x\in X$ there is a neighborhood $\Omega_x$ ---termed \emph{localizing neighborhood}--- that satisfies
\begin{enumerate}
\item All causal curves contained in $\Omega_x$ have uniformly bounded $d$-length.
\item There is a continuous map $\omega_x:\Omega_x\times \Omega_x\to [0,\infty )$ such that
\begin{enumerate}
\item $(\Omega_x,d\vert_{\Omega_x\times \Omega_x},\ll_{\Omega_x},\le_{\Omega_x},\omega_x)$ is a Lorentzian pre-length space.
\item For every $y\in \Omega_x$, $I^{\pm}(y)\cap\Omega_x\neq \emptyset$ 
\end{enumerate}
\item For all $p,q\in\Omega_x$ with $p < q$ there exists a future causal curve $\gamma_{p,q}$ from $p$ to $q$  with
\begin{enumerate}
\item $L_\tau (\gamma_{p,q})\ge L_\tau (\gamma )$ for all future causal curves from $p$ to $q$ with $\gamma\subset \Omega_x$.
\item $L_\tau (\gamma_{p,q})=\omega_x(p,q)$.
\end{enumerate}
\end{enumerate}
If in addition for $p,q\in\Omega_x$ with $p\ll q$
\begin{enumerate}
\item The curve $\gamma_{p,q}$ is future timelike.
\item $L_\tau (\gamma_{p,q})>   L_\tau (\gamma )$ for all future causal curves from $p$ to $q$  contained in $\Omega_x$ and having a null segment.
\end{enumerate}
we say that $\LLS$ is \emph{regularly localizable}. If every point $x\in X$ have a basis of localizing neighborhoods we say $\LLS$ is \emph{strongly localizable}.  If every  point $x\in X$ has a basis of regularly localizing neighborhoods we say $\LLS$ is \emph{SR-localizable}.
\end{defi}

In \cite{KS}, a neighborhood $U$ of a Lorentzian pre-length space $\LLS$ is called causally closed if for any sequences $\{ p_n\}$, $\{q_n\}$ in $U$ with $p_n\le q_n$, converging to $p,q\in\overline{U}$, respectively,  it follows that $p\le q$. As it turns out, this notion leads to inconsistencies in the classical causal ladder of smooth spacetimes. More precisely, it is possible to show that with the definition given in \cite{KS}, the causal condition is equivalent to strong causality, \cambios{a statement} which is known to be false for smooth spacetimes. Therefore, for the sake of consistency we re-introduce the notion of closed causality in a way that resembles more closely the smooth scenario.

\begin{defi}
	\label{loccauc}
In a causally path connected Lorentzian pre-length space $\LLS$ we can define a causal relation $\leq_{U}$ in any open subset $U$ as follows:
\begin{itemize}
\item $p \leq_{U} q$ $\Leftrightarrow$ there exists a future directed causal curve $\gamma:[a,b] \rightarrow X$, with $\gamma(a)=p$ and $\gamma(b)=q$, such that  $\gamma([a,b]) \subset U$.
\end{itemize} 
\end{defi} 

It is important to note that in a causally path connected Lorentzian pre-length space and for any open subset $U$, we will 
have that $p \leq_{U} q$ implies $p \leq q$ in $X$. Now, we define our notion of \cambios{local causal closure}.

\begin{defi} \label{loccaucl}
A neighborhood $U$ of a Lorentzian pre-length space $\LLS$ is called \emph{causally closed} if for any sequences $\{ p_n\}$, $\{q_n\}$ with $p_n\le_{U} q_n$, converging to $p,q \in U$, respectively,  it follows that $p\le_{U} q$. If every point $x\in X$ has a causally closed neighborhood we say that  $\LLS$ is \emph{locally causally closed}.
\end{defi}

In the context of spacetimes with a smooth metrics $(M,g)$ the previous neighborhoods exist for every point see \cite[Lemma 14.2]{Oneill}.

\begin{defi}
A \emph{Lorentzian length space} is a causally connected, locally causally closed and localizable Lorentzian pre-length space such that the time separation function $\tau$ satisfies
\begin{equation*}
\tau (x,y) =\sup \{ L_\tau (\gamma )\ \mid \ \gamma\text{ is a future causal curve from }x\text{ to }y \} .
\end{equation*}
\end{defi}

As examples of Lorentzian length spaces we have strongly causal spacetimes with continuous and causally plain metrics \cite{CG, KS}, closed cone structures \cite{MinguzziC}, \cambios{generalized cones \cite{AGKS}} and timelike causal funnels \cite{KS}.

One of the main local properties of Lorentzian length spaces that we will use in this work is described in the following lemma.

\begin{lemma}[Sequence Lemma]\label{lema:seq}
Let $\LLS$ be a Lorentzian length space and $x\in X$. Then, there exist sequences $p_n\to x$, $q_n\to x$ such that
\begin{enumerate}
\item $p_n\ll x\ll q_n$, $\forall n\in\mathbb{N}$.
\item $p_n\ll p_{n+1}$ and $q_{n+1}\ll q_n$, $\forall n\in\mathbb{N}$.
\end{enumerate}
\end{lemma}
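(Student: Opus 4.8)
The plan is to produce both sequences from timelike curves issuing from (respectively terminating at) $x$, and to read off all four conditions directly from the definition of a timelike curve. Recall that if $\eta$ is a future directed timelike curve then $\eta(s)\ll\eta(t)$ whenever $s<t$; thus nested parameter values along a single such curve automatically yield a chronological chain, while Lipschitz continuity of $\eta$ controls convergence. The only genuine input required is that chronologically related points accumulate arbitrarily close to $x$, and this is precisely what localizability together with causal path connectedness supply.

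First I would construct $q_n$. Let $\Omega_x$ be a localizing neighborhood of $x$. By property (2b) in Definition \ref{localizing}, $I^+(x)\cap\Omega_x\neq\emptyset$, so there exists $y$ with $x\ll y$. Since $X$ is causally path connected, there is a future directed timelike curve $\gamma:[a,b]\to X$ with $\gamma(a)=x$ and $\gamma(b)=y$. Choose any strictly decreasing sequence $t_n\downarrow a$ with $t_n>a$ for all $n$, and set $q_n:=\gamma(t_n)$. Continuity of $\gamma$ gives $q_n=\gamma(t_n)\to\gamma(a)=x$. Since $a<t_n$ and $\gamma$ is timelike, $x=\gamma(a)\ll\gamma(t_n)=q_n$; and since $a<t_{n+1}<t_n$, timelikeness again yields $q_{n+1}=\gamma(t_{n+1})\ll\gamma(t_n)=q_n$. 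This establishes $x\ll q_n$, $q_n\to x$ and $q_{n+1}\ll q_n$.

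The sequence $p_n$ is obtained symmetrically. Using $I^-(x)\cap\Omega_x\neq\emptyset$ we pick $z\ll x$, and causal path connectedness furnishes a future directed timelike curve $\sigma:[c,e]\to X$ with $\sigma(c)=z$ and $\sigma(e)=x$. Taking a strictly increasing sequence $s_n\uparrow e$ with $s_n<e$, and setting $p_n:=\sigma(s_n)$, continuity gives $p_n\to x$, while $s_n<e$ and $s_n<s_{n+1}$ together with the timelike property give $p_n=\sigma(s_n)\ll\sigma(e)=x$ and $p_n=\sigma(s_n)\ll\sigma(s_{n+1})=p_{n+1}$, as required.

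The step I expect to carry the real weight is the existence of chronologically related points near $x$: in a bare Lorentzian pre-length space one cannot guarantee $x\in\overline{I^{\pm}(x)}$, and it is exactly the localizability axiom (2b), combined with the possibility of joining chronologically related points by timelike curves (causal path connectedness), that closes this gap. Once a single timelike curve through $x$ is in hand, conditions (1) and (2) are formal consequences of the definition of a timelike curve and of continuity, so no further analytic estimates are needed.
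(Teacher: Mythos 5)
Your proof is correct and follows essentially the same route as the paper's: both use property (2b) of a localizing neighborhood to find points chronologically related to $x$, then causal path connectedness to join them to $x$ by timelike curves, and finally read off the nested chronological chains and convergence from parameter values approaching the endpoint at $x$. The paper simply makes the explicit choices $p_n=\alpha(1-1/n)$ and $q_n=\beta(1/n)$ where you take arbitrary monotone parameter sequences.
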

\begin{proof}
Let $\Omega_x$ be a localizing neighborhood of $x$. Then, there exist $p_1\in I^-(x)\cap \Omega_x$ and $q_1\in I^+(x)\cap\Omega_x$. By causal connectedness, there exist a future 
directed timelike curve $\alpha :[0,1]\to X$ from $p_1$ to $x$ and a future directed timelike curve $\beta :[0,1]\to X$ from $x$ to $q_1$. Thus $p_n=\alpha (1-1/n)$, $q_n =\beta (1/n)$ are the desired sequences. 
\end{proof}

As the first consequence of the Sequence Lemma, we have the following characterization of the closure of future or past sets in a Lorentzian length space (\cambios{compare to} \cite[Proposition 3.3]{Penrose}).
\begin{prop}\label{prop:clI} 
Let $(X,\ll ,\le )$ be a Lorentzian length space. Then 
\begin{eqnarray*}
\overline{I^+(x)}&=&\{y\in X\, \mid\, I^+(y)\subset I^+(x) \},\\
\overline{I^-(x)}&=&\{y\in X\, \mid\, I^-(y)\subset I^-(x) \}.
\end{eqnarray*}
\end{prop}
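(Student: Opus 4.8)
The plan is to prove the first identity; the statement for $I^-(x)$ then follows by the analogous time-dual argument. Write $A=\overline{I^+(x)}$ and $B=\{y\in X : I^+(y)\subset I^+(x)\}$, and establish the two inclusions $A\subseteq B$ and $B\subseteq A$ separately.

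For $A\subseteq B$, I would fix $y\in\overline{I^+(x)}$ and an arbitrary $z\in I^+(y)$, and show $z\in I^+(x)$. Since $y\ll z$ we have $y\in I^-(z)$, and by Proposition \ref{prop:Iopen} the set $I^-(z)$ is open, hence a neighborhood of $y$. As $y$ lies in the closure of $I^+(x)$, this neighborhood must meet $I^+(x)$: there is a point $w$ with $x\ll w$ and $w\ll z$. Transitivity of $\ll$ then gives $x\ll z$, i.e.\ $z\in I^+(x)$. Since $z$ was arbitrary, $I^+(y)\subset I^+(x)$ and thus $y\in B$.

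For $B\subseteq A$, the key tool is the Sequence Lemma (Lemma \ref{lema:seq}). Given $y\in B$, I would apply the lemma at the point $y$ to obtain a sequence $q_n\to y$ with $y\ll q_n$ for every $n$; that is, $q_n\in I^+(y)$. Since $I^+(y)\subset I^+(x)$ by hypothesis, each $q_n$ lies in $I^+(x)$, and because $q_n\to y$ we conclude $y\in\overline{I^+(x)}=A$. The two inclusions together yield the claim.

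I expect the substantive point to be the reverse inclusion $B\subseteq A$. The first inclusion uses only openness of the chronological past and transitivity of $\ll$, which already hold in an arbitrary Lorentzian pre-length space; by contrast, the containment $B\subseteq A$ genuinely requires the existence of timelike-related points arbitrarily close to $y$. This approximation property is precisely what the Sequence Lemma supplies, and it is there that the full Lorentzian \emph{length} space structure---localizability together with causal path connectedness---enters the argument. Consequently the only care needed is to invoke Lemma \ref{lema:seq} at the point $y$ rather than at $x$.
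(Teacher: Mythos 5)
Your proposal is correct and follows essentially the same route as the paper: openness of $I^-(z)$ plus the push-up/transitivity for the inclusion $\overline{I^+(x)}\subseteq B$, and the Sequence Lemma applied at $y$ for the reverse inclusion. Your closing remark about where the length-space structure actually enters is accurate (the paper even invokes the Sequence Lemma in the forward direction merely to produce a point $z\in I^+(y)$, which your phrasing correctly shows is not needed for the inclusion itself).
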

\begin{proof}
Let $y\in \overline{I^+(x)}$, thus there is a sequence $y_n\to y$, $y_n\in I^+(x)\ \forall n\in\mathbb{N}$. Let $z\in I^+(y)$, ---this $z$ exists by the Sequence Lemma and the existence of localizing neighborhoods--- then $y\in I^-(z)$. From Proposition \ref{prop:Iopen} we have that $y_n\in I^-(z)$ for sufficiently large $n$. Thus $x\ll y_n\ll z$ for all such values of $n$. That is, $z\in I^+(x)$. Hence $I^+(y)\subset I^+(x)$. Conversely, let $I^+(y)\subset I^+(x)$ and consider a sequence $y_n\to y$, $y_n\in I^+(y),\ \forall n\in\mathbb{N}$ in virtue of Lemma \ref{lema:seq}. Thus, $y_n\in I^+(x)$ and hence $y_n\in \overline{I^+(x)}$ proving that $\overline{I^+(x)}=\{y\in X\, \mid\, I^+(y)\subset I^+(x) \}$. The characterization of $\overline{I^-(x)}$ is done in an analogous way.
\end{proof}

\section{Causal hierarchy}\label{sec:CH}

In \cite{KS} Kunzinger and S\"amann defined the causal hierarchy for both causal spaces and Lorentzian length spaces. For the latter they considered the (partial) causal ladder: 

\medskip

\begin{center}
\begin{tabular}{c}
Globally Hyperbolic\\
$\Downarrow$\\
Strongly Causal\\
$\Downarrow$\\
Non-totally Imprisoning\\
$\Downarrow$\\
Causal\\
$\Downarrow$\\
Chronological
\end{tabular}
\end{center}
with the following definitions.

\begin{defi}
A Lorentzian length space $\LLS$  is
\begin{enumerate}
\item \emph{Chronological} if $x\not\ll x$ for all $x\in X$.
\item \emph{Causal} if $x\le y$ and $y\le x$ implies $x=y$.
\item \emph{Non-totally imprisoning} if for every compact set $K\subset X$ there is a $C\ge 0$ such that the $d$-arclength of all causal curves contained in $K$ is bounded by $C$.
\item \emph{Strongly causal} if the Alexandrov topology $\mathcal{A}$ coincides with the topology $\mathcal{D}$ induced by $d$.
\item \emph{Globally hyperbolic} if it is non-totally imprisoning and the causal diamonds $J^+(x)\cap J^-(y)$ are compact sets.
\end{enumerate}
\end{defi}

\begin{rem} We would like to remark the following: 
\begin{enumerate}
\item The notions (1)-(4) are equivalent to the corresponding usual notions in smooth spacetimes (see \cite[Lemma 3.3, Corollary 3.15 and Theorem 3.26]{KS}). 
 \item For smooth spacetimes, the notion global hyperbolicity \cambios{does not require non-imprisonment \cite{HM,MS}.} None\-theless, for spacetimes with $C^0$ metrics, \cambios{this condition}  is required \cite{Saemann}.
\end{enumerate}
\end{rem}

Notice that in the version of the causal ladder presented in \cite{KS} the upper levels corresponding to causal simplicity, causal continuity and stable causality are not considered. Although these levels are usually defined for smooth spacetimes in terms of the regularity of time functions, all these levels admit formulations solely in terms of the causal structure. Before we do so, we introduce a couple more concepts of causal theory. Thus we follow this latter approach and consider the following definitions for the upper levels of the causal ladder for Lorentzian length spaces.

\begin{defi} \label{defi:distinguishing}
A Lorentzian length space $\LLS$ is \emph{future (past) distinguishing} if $I^+(x)=I^-(y)$ ($I^-(x)=I^-(y)$) implies $x=y$. If $\LLS$ is both future and past distinguishing, then it is called \emph{distinguishing}.
\end{defi}

\begin{defi}
A Lorentzian length space $\LLS$ is \emph{past reflective} if the condition
$ I^{+}(x) \subset I^{+}(y) \Longrightarrow I^{-}(y) \subset I^{-}(x)$  holds. Similarly, it is \emph{future reflective} if $ I^{-}(y) \subset I^{-}(x) \Longrightarrow I^{+}(x) \subset I^{+}(y)$. A future and past reflective Lorentzian length space is called \emph{reflective.}
\end{defi}

\begin{rem}\label{rem:refeq}
Notice that in virtue of Proposition \ref{prop:clI} future reflectivity is equivalent to the condition $y \in \overline{I^{+}(x)} \Longrightarrow x \in \overline{I^{-}(y)}$ and likewise, past reflectivity is equivalent to $y \in \overline{I^{-}(x)} \Longrightarrow x \in \overline{I^{+}(y)}$ (see \cite[Lemma 4.32, (i) and (ii)]{MS}).
\end{rem}

Following \cite{SW}, we make the following definition.

\begin{defi}\label{defi:K}
Let $\LLS$ be a Lorentzian length space. We define the relation $K^+\subset X\times X$ as the smallest transitive and closed\footnote{$K^+$ is closed if $\{(x_n,y_n)\}\to (x,y)$ with $\{(x_n,y_n)\}\subset K^+$ implies $(x,y)\in K^+$.} relation that includes $J^+=\{(x,y)\in X\times X\, \mid\, x\le y\}$.
\end{defi}

Now we state the upper levels of the causal ladder.

\begin{defi}
A Lorentzian length space $\LLS$ is \emph{causally simple} if it is causal and the sets $J^{\pm}(x)$ are closed for all $x\in X$. 
\end{defi}

\begin{defi}\label{defi:CC}
A Lorentzian length space $\LLS$ is \emph{causally continuous} if it is distinguishing and reflective.
\end{defi}

\begin{defi}\label{defi:SC}
A Lorentzian length space $\LLS$ is \emph{stably causal} if the relation $K^+$ is antisymmetric.
\end{defi}

\begin{rem}
For smooth spacetimes we have:
\begin{enumerate}
\item The equivalence between definition \ref{defi:CC} and the standard notion of causal continuity can be found in \cite{MS}. 
\item The definition \ref{defi:SC} corresponds to the notion fo $K$-causality introduced by Sorkin and Woolgar in \cite{SW} and studied extensively by Minguzzi, who established the equivalence between the usual notion of stable causality and $K$-causality \cite{MinguzziKC}.
\end{enumerate}
\end{rem}

As in the case of smooth spacetimes, we move on into proving the different logical implications between the above causality conditions, thus establishing the causal hierarchy for Lorentzian length spaces.

\begin{lemma}\label{lemma:cl}
Let $\LLS$  be a Lorentzian length space. Then $\overline{I^{\pm}(x)}=\overline{J^{\pm}(x)}$ for all $x \in X$.
\end{lemma}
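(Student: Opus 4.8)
The goal is to establish the equality $\overline{I^{\pm}(x)}=\overline{J^{\pm}(x)}$. I will focus on the future case $\overline{I^{+}(x)}=\overline{J^{+}(x)}$, since the past case follows by the time-dual argument.

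The plan is to prove two inclusions. For the inclusion $\overline{I^{+}(x)}\subset\overline{J^{+}(x)}$, I would simply observe that $I^{+}(x)\subset J^{+}(x)$, which is immediate from axiom (4) of a causal space (namely $x\ll y\Rightarrow x\le y$). Taking closures of both sides yields the inclusion directly, with no further work needed.

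The substantive content lies in the reverse inclusion $\overline{J^{+}(x)}\subset\overline{I^{+}(x)}$, for which it suffices to show $J^{+}(x)\subset\overline{I^{+}(x)}$ and then take closures. So let $y\in J^{+}(x)$, i.e. $x\le y$. The strategy is to produce a sequence of points in $I^{+}(x)$ converging to $y$, thereby placing $y$ in $\overline{I^{+}(x)}$. The natural tool is the Sequence Lemma (Lemma~\ref{lema:seq}) applied at $y$: it furnishes a sequence $q_n\to y$ with $y\ll q_n$ for all $n$. Since $x\le y\ll q_n$, the push-up Lemma (Lemma~\ref{lema:push}) gives $x\ll q_n$, that is $q_n\in I^{+}(x)$. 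As $q_n\to y$, we conclude $y\in\overline{I^{+}(x)}$, which is exactly what is required.

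The main (and really only) obstacle is ensuring the hypotheses for the Sequence Lemma and push-up are in force; both hold since we are working in a Lorentzian length space, where localizability and causal path connectedness guarantee the existence of the approximating timelike sequence, and push-up holds in any Lorentzian pre-length space. One subtle point worth flagging is the degenerate boundary case: if $y$ happens to coincide with $x$ and $x$ is not chronologically related to itself, the argument still goes through because we use $q_n\in I^{+}(y)$ together with push-up through $y$, rather than requiring $y\in I^{+}(x)$ outright. After assembling the future case, the past statement $\overline{I^{-}(x)}=\overline{J^{-}(x)}$ follows by replacing future sets with past sets and using the time-dual halves of Lemmas~\ref{lema:seq} and~\ref{lema:push}, completing the proof.
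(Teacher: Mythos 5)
Your proposal is correct and follows essentially the same route as the paper's proof: both establish $J^{+}(x)\subset\overline{I^{+}(x)}$ by applying the Sequence Lemma at $y$ to obtain $q_n\in I^{+}(y)$ with $q_n\to y$ and then using push-up on $x\le y\ll q_n$, with the reverse inclusion coming from $I^{+}(x)\subset J^{+}(x)$.
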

\begin{proof}
By Proposition \ref{prop:Iopen}  we have $I^{\pm}(x) \neq \emptyset$ for all $x \in X$. Let $y\in J^+(x)$, then by the Sequence Lemma \ref{lema:seq} there exists a sequence $y_n\to y$ with $y_n\in I^+(y)$. Thus, the Push up Lemma  guarantees that $y_n\in I^+(x)$ and hence $y\in\overline{I^{+}(x)}$. Therefore $J^{+}(x) \subset \overline{I^{+}(x)}$ for all $x \in X$, and the claim follows from $I^{+}(x) \subset J^{+}(x)$. The past case is analogous.
\end{proof}

\begin{prop}\label{GHtoCS}
If $\LLS$ is a globally hyperbolic Lorentzian length space, then $\LLS$ is causally simple.
\end{prop}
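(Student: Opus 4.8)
The plan is to verify the two defining ingredients of causal simplicity separately: that the space is causal, and that the sets $J^{\pm}(x)$ are closed for every $x$. Causality comes essentially for free. Since the causal ladder established in \cite{KS} (displayed above) already yields Globally Hyperbolic $\Rightarrow$ Strongly Causal $\Rightarrow$ Causal, any globally hyperbolic Lorentzian length space is in particular causal. Hence the substance of the proof lies entirely in showing that $J^+(x)$ (and, dually, $J^-(x)$) is closed.

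To prove $J^+(x)$ closed I would show $\overline{J^+(x)} \subset J^+(x)$. First I would take $y\in \overline{J^+(x)}$ together with a sequence $y_n\to y$ satisfying $x\le y_n$, and then trap the tail of $(y_n)$ inside a single compact causal diamond. To obtain such a diamond, apply the Sequence Lemma \ref{lema:seq} to the point $y$ to produce a point $q$ with $y\ll q$. By Proposition \ref{prop:Iopen} the set $I^-(q)$ is open and contains $y$, so $y_n\in I^-(q)$ for all sufficiently large $n$; for these $n$ we get $y_n\ll q$, hence $y_n\le q$. Combining with $x\le y_n$, we conclude $y_n\in J^+(x)\cap J^-(q)$ for all large $n$.

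This is where global hyperbolicity enters: the diamond $J^+(x)\cap J^-(q)$ is compact, and therefore closed in the metric space $(X,d)$. Since $y_n\to y$ with $y_n$ eventually lying in this closed set, the limit satisfies $y\in J^+(x)\cap J^-(q)\subset J^+(x)$, that is, $x\le y$. Thus $\overline{J^+(x)}=J^+(x)$, and the closedness of $J^-(x)$ follows by the symmetric argument, using the past part of Lemma \ref{lema:seq} to produce a point $p\ll y$ and the compact diamond $J^+(p)\cap J^-(x)$. Together with causality this establishes causal simplicity.

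I do not expect a serious obstacle here; the one step requiring care is the passage from a merely convergent sequence in $J^+(x)$ to a sequence trapped in a \emph{single} compact diamond, which is precisely what the openness of $I^-(q)$ together with the auxiliary point $q\gg y$ from the Sequence Lemma provides. For completeness I would note the alternative route afforded by Lemma \ref{lemma:cl}: since $\overline{J^+(x)}=\overline{I^+(x)}$, closedness of $J^+(x)$ is equivalent to $\overline{I^+(x)}\subset J^+(x)$; however, the diamond-trapping argument above seems the most direct.
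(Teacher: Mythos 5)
Your proposal is correct and follows essentially the same route as the paper: both arguments trap the tail of a sequence converging to $y\in\overline{J^{+}(x)}$ inside the compact (hence closed) diamond $J^{+}(x)\cap J^{-}(q)$ for an auxiliary point $q\gg y$ supplied by the Sequence Lemma, using the openness of $I^{-}(q)$. The only cosmetic differences are that you obtain causality via strong causality rather than via non-total imprisonment, and you work directly with a sequence in $J^{+}(x)$ instead of first passing to one in $I^{+}(x)$ via Lemma \ref{lemma:cl}.
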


\begin{proof}
Suppose that $\LLS$ is globally hyperbolic, then it is non totally imprisoning and hence causal. It only remains to show that the sets $J^{\pm}(x)$ are closed subsets for all $x \in X$. Let $x \in X$ and $y \in \overline{J^{+}(x)}$. By Lemma \ref{lemma:cl} there exists a sequence $y_{n} \in I^{+}(x)$ such that $y_{n} \rightarrow y$. For any point $y^{+} \in I^{+}(y)$ we have that $I^{-}(y^{+}) \cap I^{+}(x) \neq \emptyset$. Since $I^{-}(y^{+})$ is open and $\{y_{n}\}$ converges to $y$, we have that $y_{n} \in I^{-}(y^{+}) \cap I^{+}(x) \subset J^{-}(y^{+}) \cap J^{+}(x)$, where the latter subset is compact, so $ \overline{J^{-}(y^{+}) \cap J^{+}(x)}=J^{-}(y^{+}) \cap J^{+}(x)\subset J^{+}(x)$. Therefore, $y \in J^{+}(x)$ and this implies that $J^{+}(x)$ is closed. The past case is analogous.   
\end{proof}

\begin{prop}\label{CStoCC}
If $\LLS$ is a causally simple Lorentzian length space, then $\LLS$ is causally continuous.
\end{prop}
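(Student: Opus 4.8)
The plan is to show that a causally simple Lorentzian length space is both distinguishing and reflective, since these are exactly the two ingredients of Definition \ref{defi:CC}. I will treat reflectivity first, because it follows almost immediately from closedness of the causal sets, and then use reflectivity together with closedness to pin down the distinguishing property.

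For reflectivity, the key observation is that causal simplicity gives $J^{\pm}(x)$ closed, so by Lemma \ref{lemma:cl} we have $\overline{I^{\pm}(x)}=J^{\pm}(x)$ for all $x$. I would then argue via the reformulation in Remark \ref{rem:refeq}: future reflectivity is equivalent to $y\in\overline{I^-(x)}\Rightarrow x\in\overline{I^+(y)}$. Suppose $y\in\overline{I^-(x)}=J^-(x)$, i.e. $y\le x$. Then $x\in J^+(y)=\overline{I^+(y)}$, which is precisely the desired conclusion. The past case is symmetric, using $J^+(x)=\overline{I^+(x)}$. Thus closedness of the causal sets makes reflectivity essentially automatic.

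For the distinguishing property, I would argue by contradiction. Suppose $\LLS$ is future distinguishing fails, so there exist $x\neq y$ with $I^+(x)=I^+(y)$. Taking closures and invoking Lemma \ref{lemma:cl} and closedness of $J^+$, we get $J^+(x)=\overline{I^+(x)}=\overline{I^+(y)}=J^+(y)$; in particular $x\le y$ and $y\le x$. But causal simplicity includes causality, so $x\le y\le x$ forces $x=y$, a contradiction. Hence future distinguishing holds, and past distinguishing follows from the symmetric argument using $I^-(x)=I^-(y)$ and $J^-$. Combining, $\LLS$ is distinguishing.

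The main obstacle I anticipate is the logical bookkeeping in the distinguishing step: one must be careful that $I^+(x)=I^+(y)$ genuinely upgrades to $J^+(x)=J^+(y)$ (this uses that closures of chronological and causal futures coincide, Lemma \ref{lemma:cl}, together with $J^+$ being closed so that $\overline{I^+}=J^+$), and that the resulting $x\le y$ and $y\le x$ are then killed by causality rather than by some weaker hypothesis. Everything else is a direct translation through Remark \ref{rem:refeq} and Lemma \ref{lemma:cl}, so no delicate limiting or compactness argument should be needed; the content is entirely in how closedness of $J^{\pm}$ collapses the chronological and causal relations onto one another.
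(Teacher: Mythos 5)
Your proposal is correct and follows essentially the same route as the paper: both halves rest on the identity $\overline{I^{\pm}(x)}=\overline{J^{\pm}(x)}=J^{\pm}(x)$ coming from Lemma \ref{lemma:cl} plus closedness, and the reflectivity step via Remark \ref{rem:refeq} is identical to the paper's. The one genuine (if small) difference is in the distinguishing step: the paper invokes the Sequence Lemma to produce $q_n\ll q$, $q_n\to q$ with $q_n\in I^-(p)$ and thereby lands in $\overline{I^-(p)}=J^-(p)$, whereas you pass directly from $I^+(x)=I^+(y)$ to $J^+(x)=\overline{I^+(x)}=\overline{I^+(y)}=J^+(y)$ and then extract $x\le y$ and $y\le x$ from reflexivity of $\le$ (axiom (3) of a causal space). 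Your version is marginally cleaner since it avoids the Sequence Lemma (and the unused disjoint neighborhoods of the paper's argument) and leans only on the causal-space axioms; both then conclude by causality, which is indeed the hypothesis doing the killing. One cosmetic slip: you attach the condition $y\in\overline{I^-(x)}\Rightarrow x\in\overline{I^+(y)}$ to \emph{future} reflectivity, whereas Remark \ref{rem:refeq} labels that as \emph{past} reflectivity; since you prove both implications by symmetry anyway, this does not affect correctness.
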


\begin{proof}
Suppose that $\LLS$ is causally simple. We have to prove that $\LLS$ is distinguishing and reflective. First, let us prove the distinguishing condition. By contradiction suppose that  $\LLS$ is not distinguishing, then, there exist $p,q \in X$ with $p \neq q$ and such that $I^{-}(p)=I^{-}(q)$. Take $U$ and $V$ two disjoint neighborhoods of $p$ and $q$ respectively. By the Sequence Lemma we can take a sequence  
$\{q_n\} \subset V$ with $q_n \ll q$ and $q_n \rightarrow q$. Since $I^{-}(q)=I^{-}(p)$ we have that $q_n \in I^{-}(p)$, therefore $q \in \overline{I^{-}(p)}=\overline{J^{-}(p)}=J^{-}(p)$. An analogous reasoning shows that $p \in \overline{I^{+}(q)}=\overline{J^{+}(q)}=J^{+}(q)$. Then, $p \leq q \leq p$ and $p \neq q$ and this is a contradiction since $\LLS$ is a causal Lorentzian length space.    
It only remains to prove that $\LLS$ is reflecting. Further, by Remark \ref{rem:refeq} we need to show that conditions 
\begin{equation*}
y \in \overline{I^{+}(x)} \Longrightarrow x \in \overline{I^{-}(y)}\quad \text{and}\quad y \in \overline{I^{-}(x)} \Longrightarrow x \in \overline{I^{+}(y)}
\end{equation*}
hold in our case. Let $y \in \overline{I^{+}(x)}$, then $y \in \overline{I^{+}(x)}=\overline{J^{+}(x)}=J^{+}(x)$  and so $x \in J^{-}(y)=\overline{J^{-}(y)}=\overline{I^{-}(y)}$. The proof of the second condition is analogous. Then, $\LLS$ is reflecting, and hence it is causally continuous.   
\end{proof}

\begin{prop}\label{CCtoSC}
If $\LLS$ is a causally continuous Lorentzian length space, then $\LLS$ is stably causal.
\end{prop}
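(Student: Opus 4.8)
The plan is to exhibit a single closed, transitive and antisymmetric relation $R\subset X\times X$ with $J^+\subseteq R$. Once this is done, Definition \ref{defi:K} forces $K^+\subseteq R$, since $K^+$ is the \emph{smallest} closed transitive relation containing $J^+$; and antisymmetry of $R$ passes immediately to the subrelation $K^+$: if $(x,y),(y,x)\in K^+\subseteq R$ then $x=y$. Thus the whole difficulty is concentrated in producing the right $R$ and verifying its four properties.

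The relation I would use is
\[
R=\{(x,y)\in X\times X \mid \overline{I^-(x)}\subseteq \overline{I^-(y)}\}.
\]
Three of the four required properties are quick. That $J^+\subseteq R$ follows from push up (Lemma \ref{lema:push}): $x\le y$ gives $I^-(x)\subseteq I^-(y)$, hence $\overline{I^-(x)}\subseteq\overline{I^-(y)}$. Transitivity is immediate from transitivity of set inclusion. For antisymmetry I would invoke the distinguishing hypothesis: if $(x,y),(y,x)\in R$ then $\overline{I^-(x)}=\overline{I^-(y)}$, and to upgrade this to $I^-(x)=I^-(y)$ one takes $v\ll x$, interpolates $v\ll v'\ll x$ along a timelike curve (causal path-connectedness), observes $v'\in I^-(x)\subseteq\overline{I^-(y)}$ so that $I^-(v')\subseteq I^-(y)$ by Proposition \ref{prop:clI}, and concludes $v\in I^-(v')\subseteq I^-(y)$; by symmetry $I^-(x)=I^-(y)$, and Definition \ref{defi:distinguishing} yields $x=y$.

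The main obstacle is closedness of $R$. Given $(x_n,y_n)\to (x,y)$ with $\overline{I^-(x_n)}\subseteq \overline{I^-(y_n)}$, I must show $I^-(x)\subseteq \overline{I^-(y)}$. For $v\ll x$, openness of $\ll$ (Proposition \ref{prop:Iopen}) gives $v\ll x_n$ for large $n$, so $v\in\overline{I^-(y_n)}$ and thus $I^-(v)\subseteq I^-(y_n)$. Picking any $u\in I^-(v)$ (nonempty by the Sequence Lemma \ref{lema:seq}) we get $u\ll y_n$ for large $n$, whence $y\in\overline{I^+(u)}$; here reflectivity enters through Remark \ref{rem:refeq}, converting this into $u\in\overline{I^-(y)}$, that is $I^-(u)\subseteq I^-(y)$. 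Since every $u\in I^-(v)$ satisfies this and $\bigcup_{u\in I^-(v)}I^-(u)=I^-(v)$ (again by interpolating timelike points), we obtain $I^-(v)\subseteq I^-(y)$, i.e.\ $v\in\overline{I^-(y)}$ by Proposition \ref{prop:clI}. This proves $R$ closed and completes the argument.

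I expect the delicate point to be exactly this closedness step: the naive limit argument only yields $x\in\overline{I^-(w)}$ for points $w\gg y$, landing one inside closures of closures; it is precisely reflectivity (via Remark \ref{rem:refeq}) together with the density of chronological relations along timelike curves that lets one collapse these back to a genuine inclusion $I^-(v)\subseteq I^-(y)$. Symmetrically, the whole proof could be run with $R'=\{(x,y)\mid \overline{I^+(y)}\subseteq\overline{I^+(x)}\}$, using future distinguishing and the other half of reflectivity.
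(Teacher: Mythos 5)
Your proof is correct and follows essentially the same strategy as the paper's: exhibit an auxiliary relation containing $J^+$ that is transitive, closed (via reflectivity, to collapse the double closures) and antisymmetric (via distinguishing), then invoke minimality of $K^+$. The only difference is cosmetic: you work with the time-dual relation $\{(x,y)\mid \overline{I^-(x)}\subseteq\overline{I^-(y)}\}$, i.e.\ $x\in\overline{I^-(y)}$, where the paper uses $\{(x,y)\mid y\in\overline{I^+(x)}\}$, so you invoke future reflectivity and past distinguishing in place of their past/future counterparts.
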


\begin{proof}
We start by defining the relation $R^+\subset X\times X$ by
\begin{equation*}
R^+=\{ (x,y) \, \mid\, y\in\overline{I^+(x)}\}.
\end{equation*}
Notice that if $(x,y)\in J^+$ then $y\in J^+(x)\subset \overline{I^+(x)}$, so, $(x,y)\in R^+$ and this implies that $J^+\subset R$. Now let us prove that $R^+$ is a closed subset of $X\times X$. Take $(x,y) \in \overline{R^+}$, so, there exists a sequence $\{(x_m,y_m)\} \subset R^+$ with $(x_m,y_m) \rightarrow (x,y)$. Consider a sequence $\{q_n\} \subset I^+(y)$ with $q_n \rightarrow y$ and $q_{n+1} \ll q_{n}$.  By the convergence of $\{y_{m}\}$ to $y$ and the fact that $y_{m} \in \overline{I^{+}(x_m)}$ we have that $q_n \in I^{+}(x_m)$ for $m \geq M_{0}(n)$ for some $M_{0}(n) \in \N$. Hence $I^-(q_n)\cap I^+(x_m)$ is a non-empty open set for $m\geq M_0(n)$.
Thus, $x_{m} \ll q_{n}$ for $m \geq M_{0}(n)$ and this implies that $x \in \overline{I^{-}(q_{n})}$ since $x_{m} \rightarrow x$. Since $\LLS$ is a causally continuous Lorentzian length space we have that $x \in \overline{I^{-}(q_{n})}$ implies that $q_{n} \in  \overline{I^{+}(x)}$ (see Remark \ref{rem:refeq}) and observe that the inclusion is valid for all $n \in \N$. The convergence of $\{q_{n}\}$ to $y$ implies that $y  \in \overline{\overline{I^+(x)}}=\overline{I^{+}(x)}$, so, $(x,y) \in R^+$ and $R^+$ is closed.

Furthermore, $R^+$ is transitive. To show this, let $(x,y)\in R^+$ and $(y,z)\in R^+$. Thus $y \in \overline{I^{+}(x)}$ and $z \in \overline{I^{+}(y)}$, the characterization of the closure of future sets provides the following inclusions:
\[
I^{+}(z) \subset I^{+}(y) \subset I^{+}(x).
\]   
We obtain that $I^{+}(z) \subset I^{+}(x)$ and this is equivalent to $z \in \overline{I^{+}(x)}$, so, $(x,z) \in R^+$. As a consequence of the definition of $K^+$ we have $K^+\subset R^+$.

Finally, let $(x,y)$ and $(y,x)$ be in $R^+$. As before we have $I^+(y) \subset I^+(x)$ and $I^+(x) \subset I^{+}(y)$ from $y\in \overline{I^+(x)}$ and $x\in \overline{I^+(y)}$ respectively. Therefore, $I^{+}(x)=I^{+}(y)$ and by the distinguishing condition we obtain $x=y$, so, $R^+$ is antisymmetric and this implies that $K^+$ is antisymmetric as well.
\end{proof}

\begin{prop}\label{SCtoSC}
Let $\LLS$ be a Lorentzian length space with $(X,d)$ locally compact metric space. If $\LLS$ is stably causal then $\LLS$ is strongly causal.
\end{prop}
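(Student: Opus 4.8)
The plan is to use the definition of strong causality as the coincidence of the Alexandrov topology $\mathcal{A}$ with the metric topology $\mathcal{D}$. Since the sets $I^{\pm}(x)$ are $\mathcal{D}$-open by Proposition \ref{prop:Iopen}, every sub-basic Alexandrov set $I^+(a)\cap I^-(b)$ is $\mathcal{D}$-open, so the inclusion $\mathcal{A}\subseteq\mathcal{D}$ holds automatically. Hence it suffices to prove the reverse inclusion $\mathcal{D}\subseteq\mathcal{A}$, i.e.\ that the chronological diamonds form a neighborhood basis for $\mathcal{D}$ at every point. I would argue by contradiction: if this fails at some $p$, there is a $d$-open set $U\ni p$ admitting no diamond $I^+(a)\cap I^-(b)$ with $p\in I^+(a)\cap I^-(b)\subseteq U$.

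First I would exploit local compactness to tame $U$: replacing $U$ by its intersection with a relatively compact metric ball about $p$ preserves the failure (any diamond fitting inside the smaller set would also fit inside $U$) while ensuring that $\overline{U}$ is compact, and therefore that $\partial U$ is compact. Next, the Sequence Lemma \ref{lema:seq} furnishes sequences $a_n\to p$ and $b_n\to p$ with $a_n\ll p\ll b_n$; by the assumed failure each diamond $I^+(a_n)\cap I^-(b_n)$ escapes $U$, so I may choose $x_n\in\bigl(I^+(a_n)\cap I^-(b_n)\bigr)\setminus U$. Causal path connectedness then yields a future directed timelike curve from $a_n$ to $x_n$; since $a_n\in U$ for large $n$ while $x_n\notin U$, a short continuity argument locates a first exit point $y_n\in\partial U$ on this curve with $a_n\le y_n\le x_n$, whence $a_n\le y_n\le b_n$ by transitivity of $\le$ (recall $x_n\ll b_n$).

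The heart of the argument is then the interplay between compactness and the closedness of $K^+$. Because $y_n$ lies in the compact set $\partial U$, a subsequence converges to some $y\in\partial U$, and since $p\in U$ we have $y\neq p$. The pairs $(a_n,y_n)$ and $(y_n,b_n)$ lie in $J^+\subseteq K^+$, so passing to the limit along this subsequence and using that $K^+$ is closed (Definition \ref{defi:K}) yields $(p,y)\in K^+$ and $(y,p)\in K^+$ with $p\neq y$, contradicting the antisymmetry of $K^+$, i.e.\ stable causality (Definition \ref{defi:SC}). I expect the main obstacle to be precisely this confinement step: without controlling where the escaping points $x_n$ go, the exit points $y_n$ need not subconverge, and it is exactly local compactness that repairs this by trapping $\partial U$ in a compact set; a secondary point requiring care is the continuity argument certifying that the first-exit point genuinely lies on $\partial U$ and inherits the causal relations $a_n\le y_n\le b_n$.
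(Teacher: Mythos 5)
Your proof is correct and follows essentially the same route as the paper: argue by contradiction, use the Sequence Lemma to produce $a_n\ll p\ll b_n$, extract escaping points from the chronological diamonds, take first-exit points on the boundary of a relatively compact ball, and pass to a convergent subsequence to violate antisymmetry of $K^+$. The only (harmless) difference is that you obtain both $(p,y)\in K^+$ and $(y,p)\in K^+$ directly from the closedness of $K^+\supseteq\overline{J^+}$, whereas the paper derives the first relation as $(x,y)\in J^+$ by routing it through a causally closed neighborhood; your version is, if anything, slightly leaner since it does not invoke local causal closedness at that step.
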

\begin{proof}
Assume that $\LLS$ is not strongly causal in $x\in X$. Thus there exists a $\mathcal{D}$-neighborhood $W$ of $x$ that does not contain any $\mathcal{A}$-neighborhood of $x$. Let $\Omega_x$ be a localizing neighborhood of $x$ and $U$ a causally closed neighborhood of $x$ (see Definition \ref{loccaucl}). Let us further consider $V=W\cap U\cap\Omega_x$ and $B^d_{r}(x)$ (for some small $r>0$) with compact closure contained in $V$. By the Sequence Lemma, consider sequences $p_n\to x$, $q_n\to x$ with $p_n\ll x\ll q_n$, $p_n\ll p_{n+1}$, $q_{n+1}\ll q_n$, $\forall n\in\mathbb{N}$. Since $I^+(p_n)\cap I^-(q_n)$ is an Alexandrov open set that contains $x$, it can not be a subset of $U$ and so can not be contained in $B^d_{r}(x)$ \cambios{either}. Thus, there is $w_n\in X \setminus B^d_{r}(x)$ with $p_n\ll w_n\ll q_n$. By causal conectedness, there is a future timelike curve $\alpha_n$ from $p_n$ to $w_n$. Let $y_n$ be the the first point of $\alpha_n$ that meets $\partial B^d_{r}(x)$ and observe that for these points we have $p_{n} \ll_{U} y_{n}$ for all $n$. 
Since $\partial B^d_{r}(x)$ is a compact subset, there is a subsequence $\{y_{n_k}\}$ that converges to $y\in B^d_{r}(x)$. Since $U$ is causally closed and $p_{n_k}\ll_{U} x_{n_k}$ we have $x \le_{U} y$, so, $(x,y)\in J^+ \subset K^+$. Notice further that $x \neq y$ since $y \in \partial B^d_{r}(x)$. 
Also, observe that $(y_{n_{k}},q_{n_{k}}) \in I^{+} \subset J^{+}$ and thus $(y,x) \in \overline{J^{+}}$ since $(y_{n_{k}},q_{n_{k}}) \rightarrow (y,x)$. Then, $K^+$ is not  antisymmetric since $(x,y) \in J^{+} \subset K^+$ and $(y,x) \in \overline{J^{+}} \subset K^+$ with $x \neq y$.
\end{proof}

Thus, taking into account Propositions \ref{GHtoCS} to \ref{SCtoSC} we have established the hierarchy of the upper levels of the causal ladder. We summarize our results in the following theorem.

\begin{thm}
\label{causalthm}
Let $\LLS$ be a Lorentzian length space. Then, we have the following implications:
\begin{center}
\begin{tabular}{c}
Globally Hyperbolic\\
$\Downarrow$\\
Causally Simple\\
$\Downarrow$\\
Causally Continuous\\
$\Downarrow$\\
Stably Causal\\
\end{tabular}
\end{center}
In addition, if we assume $(X,d)$ is locally compact, then 
\begin{center}
\begin{tabular}{c}
Stably causal\\
$\Downarrow$\\
Strongly Causal
\end{tabular}
\end{center}
\end{thm}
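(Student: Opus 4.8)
The plan is to verify the four displayed implications one at a time, each as an independent statement, and then read off the theorem by transitivity of implication. These four arrows are precisely Propositions \ref{GHtoCS}, \ref{CStoCC}, \ref{CCtoSC} and \ref{SCtoSC}, so the theorem is their consolidation; below I indicate the mechanism behind each so that the assembly is transparent.

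For Globally Hyperbolic $\Rightarrow$ Causally Simple, I would first observe that non-total imprisonment already forces causality, so only the closedness of $J^{\pm}(x)$ remains. The key tool here is Lemma \ref{lemma:cl}, which identifies $\overline{J^{+}(x)}$ with $\overline{I^{+}(x)}$: given $y\in\overline{J^{+}(x)}$ I would approximate it by points $y_n\in I^{+}(x)$, pick any $y^{+}\in I^{+}(y)$, and exploit openness of $I^{-}(y^{+})$ to trap the tail of $\{y_n\}$ inside the causal diamond $J^{-}(y^{+})\cap J^{+}(x)$. Compactness of that diamond (global hyperbolicity) makes it closed, so the limit $y$ lands in $J^{+}(x)$, and symmetrically for the past.

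The two middle arrows are essentially formal consequences of the closure calculus established earlier. For Causally Simple $\Rightarrow$ Causally Continuous I would combine closedness of $J^{\pm}(x)$ with Lemma \ref{lemma:cl} to convert every statement about closures of chronological sets into one about causal relations: a failure of distinguishing, produced via the Sequence Lemma, would yield points with $p\le q\le p$ and $p\ne q$, contradicting causality, while reflectivity in the form of Remark \ref{rem:refeq} becomes immediate once $\overline{I^{\pm}(x)}=J^{\pm}(x)$. For Causally Continuous $\Rightarrow$ Stably Causal I would introduce the auxiliary relation $R^{+}=\{(x,y)\mid y\in\overline{I^{+}(x)}\}$, show using the Sequence Lemma and Proposition \ref{prop:clI} that it is closed, transitive and contains $J^{+}$, whence $K^{+}\subset R^{+}$ by minimality of $K^{+}$; antisymmetry of $R^{+}$ (hence of $K^{+}$) then follows from the distinguishing property, since $(x,y),(y,x)\in R^{+}$ forces $I^{+}(x)=I^{+}(y)$.

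The main obstacle is the final arrow, Stably Causal $\Rightarrow$ Strongly Causal, which is the only one needing local compactness. I expect to argue by contradiction: if strong causality fails at $x$, there is a metric neighborhood $W$ admitting no Alexandrov neighborhood of $x$ inside it, so within a small ball $B^{d}_{r}(x)$ of compact closure I can produce, for the sequences $p_n\to x\to q_n$ supplied by the Sequence Lemma, causal curves from $p_n$ that escape the ball, and hence first-exit points $y_n\in\partial B^{d}_{r}(x)$ with $p_n\le_{U}y_n$. Local compactness lets me extract $y_{n_k}\to y\in\partial B^{d}_{r}(x)$; since $U$ is causally closed this yields $x\le_{U}y$, so $(x,y)\in J^{+}\subset K^{+}$, while the relations $y_{n_k}\ll q_{n_k}$ give $(y_{n_k},q_{n_k})\in J^{+}$ converging to $(y,x)$, whence $(y,x)\in\overline{J^{+}}\subset K^{+}$. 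As $x\ne y$, this violates antisymmetry of $K^{+}$. The delicate points are guaranteeing the exit point genuinely lies off $B^{d}_{r}(x)$ and tracking the localized relation $\le_{U}$ carefully enough that the limiting pair really enters $K^{+}$; this is exactly where local compactness is indispensable.
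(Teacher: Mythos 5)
Your proposal is correct and follows the paper exactly: Theorem \ref{causalthm} is stated in the paper as a direct consolidation of Propositions \ref{GHtoCS}, \ref{CStoCC}, \ref{CCtoSC} and \ref{SCtoSC}, and each of your four sketches (the closure identity of Lemma \ref{lemma:cl} with compact diamonds for the first arrow, the closure calculus for the middle two via the relation $R^{+}$, and the boundary-exit-point contradiction using a causally closed neighborhood and local compactness for the last) reproduces the paper's argument for the corresponding proposition. Nothing is missing.
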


We close this section establishing characterizations of some levels of the causal ladder in terms of the Lorentzian distance function.

\begin{prop}
Let $\LLS$ be a distinguishing Lorentzian length space. If  $\tau:X \times X \rightarrow [0,\infty]$ is continuous, then $\LLS$ is causally continuous.
\end{prop}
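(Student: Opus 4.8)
The plan is to reduce the statement to \emph{reflectivity}. Since causal continuity (Definition \ref{defi:CC}) means distinguishing together with reflective, and the hypothesis already supplies the distinguishing property, it suffices to prove that $\LLS$ is reflective. By Remark \ref{rem:refeq} this amounts to establishing the two implications
\[
y\in\overline{I^+(x)}\ \Longrightarrow\ x\in\overline{I^-(y)} \qquad\text{and}\qquad y\in\overline{I^-(x)}\ \Longrightarrow\ x\in\overline{I^+(y)},
\]
and since these are interchanged by reversing the time orientation, I would prove the first in detail and obtain the second by the analogous dual argument.

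To prove $y\in\overline{I^+(x)}\Rightarrow x\in\overline{I^-(y)}$ I would argue by contradiction. Assume $y\in\overline{I^+(x)}$ but $x\notin\overline{I^-(y)}$. By the closure characterization of Proposition \ref{prop:clI}, the latter says exactly $I^-(x)\not\subset I^-(y)$, so there is a point $w$ with $w\ll x$ yet $w\not\ll y$. Property (4) of the time separation then forces $\tau(w,y)=0$, whereas $\tau(w,x)>0$. On the other hand, $y\in\overline{I^+(x)}$ provides a sequence $y_n\to y$ with $x\ll y_n$ for all $n$.

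The heart of the argument is to combine the reverse triangle inequality with the continuity of $\tau$. Since $w\ll x\ll y_n$, transitivity of $\ll$ gives $w\ll y_n$, and property (2) of $\tau$ applied along $w\le x\le y_n$ yields
\[
\tau(w,y_n)\ \ge\ \tau(w,x)+\tau(x,y_n)\ \ge\ \tau(w,x)\ >\ 0
\]
for every $n$. Letting $y_n\to y$ and invoking the continuity of $\tau$, I get $\tau(w,y)=\lim_n\tau(w,y_n)\ge\tau(w,x)>0$, contradicting $\tau(w,y)=0$. This settles the first implication; the second is obtained by the symmetric choice of a point $w$ with $x\ll w$ but $y\not\ll w$ and a sequence $y_n\to y$ with $y_n\ll x$. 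Hence $\LLS$ is reflective, and being distinguishing it is causally continuous.

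I do not expect a serious obstacle here, but the point worth flagging is precisely \emph{where} continuity is used: the time separation is only lower semicontinuous by definition, and lower semicontinuity would estimate $\tau(w,y)$ from below by a $\liminf$ in the unhelpful direction. What the proof genuinely needs is to transport the uniform positive lower bound $\tau(w,y_n)\ge\tau(w,x)$ to the limit point $y$, which is exactly what the extra hypothesis of continuity (in effect, upper semicontinuity along this approach) provides. The only other care required is the correct use of Proposition \ref{prop:clI} to locate the separating point $w$ from the failure of the inclusion of chronological pasts.
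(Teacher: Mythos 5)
Your proof is correct and follows essentially the same route as the paper's: pick a point in the chronological past of $x$, approximate $y$ by a sequence in $I^+(x)$, apply the reverse triangle inequality along $w\le x\le y_n$, and pass to the limit using continuity of $\tau$. The only cosmetic difference is that you frame the argument as a contradiction via the closure characterization of Proposition \ref{prop:clI} while the paper proves the inclusion of chronological pasts directly; your remark on where full continuity (as opposed to the lower semicontinuity built into the definition of $\tau$) is actually needed is exactly the right point.
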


\begin{proof}
Since $\LLS$ is distinguishing we only need to prove the reflecting condition. So, let us prove that $I^{+}(x) \subset I^{+}(y)$ implies $I^{-}(x) \subset I^{-}(y)$. Take $z \in I^{-}(p)$ and a sequence $\{y_n\} \subset I^{+}(y)$ such that $y_{n} \rightarrow y$ and $y_{n} \ll y_{n+1}$, therefore, $z \ll x \ll y_n$ since $y_n \in I^{+}(x)$ by hypothesis. By the reverse triangle inequality  we have
\[
\tau(z,y_n) \geq \tau(z,x) +\tau(x,y_{n}),
\]
hence continuity of $\tau$ gives us the following inequality:
\[
\tau(z,y) \geq \tau(z,x) +\tau(x,y) \geq \tau(z,x) > 0
\]
which is equivalent to $z \in I^{-}(y)$. Therefore, $I^{-}(x)\subset I^{-}(y)$. By an analogous reasoning we can show $I^{-}(y) \subset I^{-}(x)$ implies $I^{+}(x) \subset I^{+}(y)$, so, $\LLS$ is causally continuous. 
\end{proof}

\begin{prop}
If $\LLS$ is a strongly causal Lorentzian length space, then for each $x \in X$ there exists a neighborhood $V \ni x$ such that $\tau\mid_{V \times V}$ is continuous. 
\end{prop}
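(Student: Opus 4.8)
The plan is to show that $x$ has a neighborhood $V$ on which $\tau$ agrees with the \emph{local} time separation $\omega_x$ furnished by localizability, which is continuous by definition. First I would fix $x \in X$ and choose a localizing neighborhood $\Omega_x$ as in Definition \ref{localizing}, together with its continuous map $\omega_x : \Omega_x \times \Omega_x \to [0,\infty)$. Recall that by property (3)(a) of that definition, for $p,q \in \Omega_x$ with $p<q$ the value $\omega_x(p,q) = L_\tau(\gamma_{p,q})$ equals the supremum of $L_\tau(\gamma)$ taken over all future causal curves from $p$ to $q$ that are \emph{contained in} $\Omega_x$, whereas $\tau(p,q)$ is the analogous supremum taken over \emph{all} future causal curves from $p$ to $q$ in $X$. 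Hence the equality $\tau(p,q)=\omega_x(p,q)$ will follow as soon as one knows that every causal curve joining two points of the relevant neighborhood stays inside $\Omega_x$.

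This is where strong causality enters, and it is the crux of the argument. The key observation is that every chronological diamond is causally convex: if $p,q \in I^+(a)\cap I^-(b)$ and $\gamma$ is a future causal curve from $p$ to $q$, then any point $r$ on $\gamma$ satisfies $a \ll p \le r$ and $r \le q \ll b$, so the Push up Lemma \ref{lema:push} yields $a \ll r$ and $r \ll b$, i.e. $r \in I^+(a)\cap I^-(b)$. Since a finite intersection of causally convex sets is again causally convex, every basic open set of the Alexandrov topology $\mathcal{A}$ is causally convex. Now strong causality means exactly $\mathcal{A}=\mathcal{D}$, so these basic sets form a basis for the metric topology; taking $\Omega_x$ to contain an open neighborhood of $x$, I would extract a causally convex set $V$ with $x \in V \subset \Omega_x$.

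With such a $V$ in hand the identification is routine. For $p,q \in V$ with $p \le q$, causal connectedness provides a causal curve from $p$ to $q$, and by causal convexity \emph{every} such curve is contained in $V \subset \Omega_x$; consequently the two suprema above range over the same family of curves and $\tau(p,q) = \omega_x(p,q)$. If instead $p \not\le q$, then $p \not\le_{\Omega_x} q$ as well, so both $\tau(p,q)$ and $\omega_x(p,q)$ vanish; the diagonal case $p=q$ is handled similarly, since a strongly causal space is in particular chronological, whence $p \not\ll p$ and both quantities are again zero. Therefore $\tau\vert_{V\times V} = \omega_x\vert_{V\times V}$, and the continuity of $\omega_x$ gives the continuity of $\tau\vert_{V\times V}$.

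The main obstacle is the middle step: translating the purely topological hypothesis $\mathcal{A}=\mathcal{D}$ into the geometric statement that small causally convex neighborhoods exist inside $\Omega_x$. The delicate point to get right is that $\omega_x$ maximizes $L_\tau$ only among curves confined to $\Omega_x$, so the equality $\tau=\omega_x$ genuinely requires the confinement of \emph{all} connecting causal curves, not merely the existence of one maximizer inside $\Omega_x$; causal convexity of the Alexandrov basic sets is precisely what supplies this confinement.
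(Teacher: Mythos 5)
Your proof is correct and follows essentially the same route as the paper: pass to a causally convex neighborhood $V \subset \Omega_x$ supplied by strong causality and show $\tau\vert_{V\times V}=\omega_x\vert_{V\times V}$, with the off-diagonal, diagonal, and non-related cases handled just as the paper does. The only difference is that you spell out why a causally convex $V$ exists (Alexandrov diamonds are causally convex by the Push up Lemma, and strong causality makes them a basis for $\mathcal{D}$), a step the paper merely asserts.
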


\begin{proof}
Suppose that $\LLS$ is a strongly causal Lorentzian length space. Let $x \in X$ and consider a localizing neighborhood $\Omega_x \ni x$ and its associated continuous function $\omega_{x}: \Omega_{x} \times \Omega_{x} \rightarrow [0,\infty]$. 
Then, there exists a causally convex neighborhood $V \subset \Omega_x$ by strong causality.
Note that the causal relation $\le_{V}$, as defined in Definition \ref{loccauc}, coincides with the restricted causal relation $\le\mid_{V \times V}$, furthermore, it coincides with $\le\mid_{\Omega_{x} \times \Omega_{x}}$. Observe that $\omega_{x}$ restricted to $V \times V$ is a continuous function. We will prove that $\tau \mid_{V \times V}=\omega_{x}\mid_{V \times V}$. Indeed, for any $p,q \in V \subset \Omega_{x}$ with $p<q$ there exists a future directed causal curve $\gamma_{pq}:[a,b] \rightarrow X$ with $\gamma(a)=p$, $\gamma(b)=q$, $\gamma_{pq}([a,b]) \subset \Omega_{x}$ and $L_{\tau}(\gamma_{pq})=\omega_{x}(p,q) \leq \tau(p,q)$ (see Definition \ref{localizing}). If $\omega_{x}(p,q) < \tau(p,q)=\sup\{L_{\tau}(\gamma) \mid \ \gamma\text{ is a future causal curve from }x\text{ to }y\}$, then there must exists a future directed causal curve $\sigma$ between $p$ and $q$ with $L_{\tau}(\gamma_{pq})=\omega_{x}(p,q)<L_{\tau}(\sigma)$. Note that $\sigma$ must be included in $V$ by the causal convexity of $V$, so $\sigma \subset \Omega_{x}$ and we have arrived to a contradiction to the maximality of $\gamma_{pq}$ in $\Omega_{x}$. Then, $\omega_{x}(p,q)=\tau(p,q)$. If $p=q$, we have that $\tau(p,p)=0$ since $\LLS$ is strongly causal, thus, $\omega_{x}(p,p)=0$. Therefore, $\tau\mid_{V \times V}=\omega_{x}\mid_{V \times V}$ and so $\tau$ is continuous in $V \times V$.

\end{proof}

\section{Homothetic maps and Lorentzian length spaces}\label{sec:hom}

It is well known that in the smooth scenario, causal structure ---and hence causal hierarchy--- is conformally invariant. Thus, conformal transformations play a key role in describing the causal structure of spacetime. For instance, a famous result of Zeeman states that the group of causal automorphisms of Lorentz-Minkowski spacetime is generated by conformal transformations \cite{Zeeman}.

 On the other hand, the absence of a Lorentzian metric in a Lorentzian length space poses the problem of finding the appropriate kind of transformations between them that preserve their causal structure, or at least, their causal hierarchy. In this section we show that the class of distance homothetic maps is a good candidate for the upper levels of the causal ladder.

\begin{defi}
Let $\LLS$ and $\LLSY$ be two pre-Lorentzian length spaces. A map $f:X \rightarrow Y$ is said to be \emph{distance homothetic} if there exists a constant $c>0$ such that $\widetilde{\tau}(f(p),f(q))=c \tau(p,q)$ for all $p,q \in X$. If $c=1$, then $f$ is said to be a distance preserving map.  
\end{defi}

Following closely the analog results for smooth spacetimes in \cite{BE} we establish first a couple of technical lemmas. We include the proofs here for completeness.

\begin{lemma}
\label{ftime}
Let $\LLS$ and $\LLSY$ be pre-Lorentzian length spaces, and consider a map $f:X \rightarrow Y$ which is surjective but not necessarily continuous. If $f$ is distance homothetic, then 
\begin{enumerate}
	\item $p \ll q$  if and only if $f(p) \ll f(q)$.
	\item $f(I^{\pm}(p))=I^{\pm}(f(p))$ for all $p\in X$.
	\item $f(I^{{\pm}}(p) \cap I^{\mp}(q))=I^{{\pm}}(f(p)) \cap I^{\mp}(f(q))$.
\end{enumerate} 
\end{lemma}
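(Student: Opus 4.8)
The plan is to prove the three statements in order, since each one builds on the previous. The core of the argument is the observation that the homothety constant $c>0$ is strictly positive, so that multiplication by $c$ preserves the sign of $\tau$; combined with axiom (4) of the time separation (namely $\tau(x,y)>0 \Leftrightarrow x\ll y$), this immediately controls the chronological relation. Throughout, I would use surjectivity only where it is genuinely needed, namely to obtain the reverse inclusions in (2) and (3).

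\emph{Proof of (1).} First I would write out the defining equation $\widetilde{\tau}(f(p),f(q))=c\,\tau(p,q)$ and use $c>0$ to conclude that $\tau(p,q)>0$ if and only if $\widetilde{\tau}(f(p),f(q))>0$. Applying axiom (4) of the time separation on both sides, the left-hand positivity is equivalent to $p\ll q$ and the right-hand positivity is equivalent to $f(p)\ll f(q)$. Hence $p\ll q \Leftrightarrow f(p)\ll f(q)$, which is exactly statement (1). Note this direction needs no surjectivity or continuity.

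\emph{Proof of (2).} I would prove $f(I^{+}(p))=I^{+}(f(p))$; the past case is identical after reversing the roles. For the forward inclusion, take $y\in f(I^{+}(p))$, so $y=f(q)$ with $p\ll q$; by part (1) we get $f(p)\ll f(q)=y$, i.e.\ $y\in I^{+}(f(p))$. For the reverse inclusion I would use surjectivity: let $w\in I^{+}(f(p))$, so $f(p)\ll w$; since $f$ is onto there is $q\in X$ with $f(q)=w$, and then $f(p)\ll f(q)$ gives $p\ll q$ by part (1), whence $w=f(q)\in f(I^{+}(p))$. This establishes (2).

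\emph{Proof of (3).} The third statement follows formally from (2) together with the elementary set-theoretic fact that for any map $f$ one has $f(A\cap B)\subseteq f(A)\cap f(B)$, with equality available here because (2) already pins down both factors exactly. Concretely, I would verify $f\bigl(I^{+}(p)\cap I^{-}(q)\bigr)=I^{+}(f(p))\cap I^{-}(f(q))$ (and the $\mp$ variant symmetrically). The inclusion $\subseteq$ is automatic from (2). For $\supseteq$, take $w\in I^{+}(f(p))\cap I^{-}(f(q))$; surjectivity gives $r$ with $f(r)=w$, and then $f(p)\ll f(r)$ and $f(r)\ll f(q)$ translate via part (1) into $p\ll r$ and $r\ll q$, so $r\in I^{+}(p)\cap I^{-}(q)$ and $w=f(r)$ lies in the image. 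The main (mild) obstacle is bookkeeping: making sure surjectivity is invoked for the reverse inclusions in (2) and (3) while noting it is unnecessary for (1), and keeping the $\pm/\mp$ bookkeeping consistent so that the past and mixed diamond cases reduce cleanly to the future case already treated.
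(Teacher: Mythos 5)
Your proposal is correct and follows essentially the same route as the paper: part (1) from positivity of $\tau$ and the homothety constant $c>0$ via axiom (4) of the time separation, part (2) from (1) plus surjectivity for the reverse inclusion, and part (3) by reducing to (1) and (2). Your treatment of (3) is slightly more explicit than the paper's one-line argument, but the underlying reasoning is identical.
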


\begin{proof}
	
	We divide the proof in three steps:
	
	\begin{enumerate}
	\item Observe that $p \ll q$ if and only if $\tau(p,q)>0$, since $f$ is distance homothetic we have that $\widetilde{\tau}(f(p),f(q))=c \tau(p,q)>0$ and this is equivalent to $f(p) \widetilde{\ll} f(q)$.
	
	\item Let $q' \in f(I^+(p))$, then there exists $q \in I^+(p)$ with $q'=f(q)$ and by the previous item we have that $f(q)=q' \in I^+(f(p))$, so $f(I^{+}(p)) \subset I^{+}(f(p))$. To prove the other inclusion let $q' \in I^+(f(p))$ and note that since $f$ is onto there exists $q \in X$ such that $q'=f(q)$, so, $f(p) \widetilde{\ll} f(q)=q'$ and this is equivalent (by the previous item) to $p \ll q$. Therefore, $q'\in f(I^+(p))$ and so $I^{+}(f(p)) \subset f(I^{+}(p))$.
	
	\item The first item implies that for any $r$ with $p \ll r \ll q$ we will have $f(p) \widetilde{\ll} f(r) \widetilde{\ll} f(q)$, which gives the result.     	
	\end{enumerate}
\end{proof}

Recall that a pre-Lorentzian length space is strongly causal if the metric topology coincides with the Alexandrov topology, hence in this case we will have that a homothetic map is an open map.

\begin{lemma}\label{lema:43}
	Let  $\LLS$ be a strongly causal Lorentzian length space and $\LLSY$ be an arbitrary Lorentzian length space. If $f: X \rightarrow Y$ is a surjective distance homothetic map (not necessarily continuous) then $f$ is an open map and it is also injective.
\end{lemma}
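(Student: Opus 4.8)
Looking at Lemma 4.6 (\texttt{lema:43}), I need to prove that a surjective distance homothetic map $f: X \to Y$ from a strongly causal Lorentzian length space to an arbitrary Lorentzian length space is both open and injective.

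The plan is to handle openness first, since the remark immediately preceding the lemma already supplies the key idea. Because $(X,d,\ll,\le,\tau)$ is strongly causal, its metric topology $\mathcal{D}$ coincides with the Alexandrov topology $\mathcal{A}$, so the chronological diamonds $I^+(p)\cap I^-(q)$ form a basis for $\mathcal{D}$. To show $f$ is open it therefore suffices to check that $f$ sends each basic open set to an open set. By Lemma \ref{ftime}(3) we have the clean identity
\begin{equation*}
f\big(I^+(p)\cap I^-(q)\big)=I^+(f(p))\cap I^-(f(q)),
\end{equation*}
and the right-hand side is open in $Y$ by Proposition \ref{prop:Iopen} (the chronological sets $I^{\pm}$ are open). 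Since every $\mathcal{D}$-open set $W$ is a union of such diamonds and $f$ commutes with unions of images, $f(W)$ is a union of open sets, hence open. This establishes that $f$ is an open map.

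For injectivity, suppose $f(p)=f(q)$; I want to conclude $p=q$. The natural route is to show $I^+(p)=I^+(q)$ and $I^-(p)=I^-(q)$, since strong causality implies the space is distinguishing and then equality of chronological sets forces $p=q$. Applying Lemma \ref{ftime}(2), from $f(p)=f(q)$ we get
\begin{equation*}
f(I^{\pm}(p))=I^{\pm}(f(p))=I^{\pm}(f(q))=f(I^{\pm}(q)).
\end{equation*}
The remaining issue is to pull this equality of images back to an equality of the sets themselves; this is where I expect the main obstacle to lie, because $f$ is not assumed continuous and injectivity is exactly what is being proved, so I cannot yet invert $f$ on subsets. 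The cleaner argument avoids set-images entirely: take any $z\in I^+(p)$. Then $f(z)\in I^+(f(p))=I^+(f(q))$, so $f(q)\ll f(z)$, and by Lemma \ref{ftime}(1) this is equivalent to $q\ll z$, i.e. $z\in I^+(q)$. This gives $I^+(p)\subset I^+(q)$, and the symmetric argument gives the reverse inclusion, so $I^+(p)=I^+(q)$; the past case is identical.

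Finally, having $I^+(p)=I^+(q)$ and $I^-(p)=I^-(q)$, I invoke the distinguishing property. Strong causality sits above the distinguishing level in the causal ladder established in Theorem \ref{causalthm} (or can be seen directly: equality of the Alexandrov and metric topologies forces distinct points to have distinct chronological futures or pasts), so a distinguishing Lorentzian length space satisfies $I^+(p)=I^+(q)$ (or $I^-(p)=I^-(q)$) $\Rightarrow p=q$. Hence $p=q$ and $f$ is injective. The subtle point to state carefully is the justification that strong causality entails the distinguishing condition for Lorentzian length spaces, which I would cite from \cite{KS} rather than reprove here.
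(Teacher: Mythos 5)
Your openness argument coincides with the paper's: strong causality identifies $\mathcal{D}$ with $\mathcal{A}$, and Lemma \ref{ftime} sends chronological diamonds to chronological diamonds, which are open by Proposition \ref{prop:Iopen}. For injectivity you take a genuinely different route. The paper argues by contradiction with a localization argument: if $f(x)=f(y)$ with $x\neq y$, it chooses a neighborhood $U$ of $x$ with $y\notin U$ and a causally convex $V\subset U$ (via \cite[Lemma 2.38]{KS}), picks $p\ll x\ll q$ in $V$ by the Sequence Lemma, and notes that $f(p)\,\widetilde{\ll}\,f(y)\,\widetilde{\ll}\,f(q)$ forces $p\ll y\ll q$, putting $y$ inside $I^+(p)\cap I^-(q)\subset U$, a contradiction. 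You instead derive $I^{\pm}(p)=I^{\pm}(q)$ pointwise from Lemma \ref{ftime}(1) --- this part is clean and correct --- and then invoke a distinguishing-type property. The one weak link is your justification of that final step: Theorem \ref{causalthm} does not place the distinguishing condition anywhere in the ladder, and the implication ``strongly causal $\Rightarrow$ distinguishing'' is proved neither in this paper nor (for Lorentzian length spaces) in a form you can simply cite, so the reference you lean on is not available. Fortunately your parenthetical direct argument is sound and should be promoted to the main text: if $I^{\pm}(p)=I^{\pm}(q)$, then every chronological diamond containing $p$ contains $q$, hence every $\mathcal{A}$-open set containing $p$ contains $q$; but by strong causality the metric ball $B^d_{d(p,q)/2}(p)$ is $\mathcal{A}$-open and excludes $q$ unless $p=q$. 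Note you only need this weak form (equality of both futures and pasts), not full distinguishing. With that step made explicit your proof is complete; it is arguably shorter than the paper's and isolates a reusable fact, whereas the paper's version avoids any topological separation argument by working entirely inside a causally convex neighborhood.
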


\begin{proof}

Since $\LLS$ is strongly causal we have that the metric topology $\mathcal{D}$ is equivalent to the Alexandrov topology $\mathcal{A}$, thus by Lemma \ref{ftime} (item $(2)$) we have that $f(I^{+}(p) \cap I^{-}(q))=I^{+}(f(p)) \cap I^{-}(f(q))$, therefore, $f$ \cambios{maps} open subsets to open subsets in $(Y,\widetilde{d})$. Now, we have to prove that $f$ is one to one. By contradiction assume that there exist $x,y \in X$ ($x \not = y$) with $f(x)=f(y)$. Take an open neighborhood $U$ of $x$ such that $y \not \in U$   
and consider $V \subset U$ such that any causal curve with endpoints in $V$ is entirely contained in $U$ (see \cite[Lemma 2.38]{KS}), in particular $y \not \in V$. Using the Sequence Lemma \ref{lema:seq} we can take $p,q \in V$ such that $p \ll x \ll q$ and by the defining property on $V$ we have that $I^{+}(p) \cap I^{-}(q) \subset U$, so $y \not \in I^{+}(p) \cap I^{-}(q)$. By Lemma \ref{ftime} again, we have that $f(p) \widetilde{\ll} f(x)=f(y) \widetilde{\ll} f(q)$ and this implies that $p \ll y \ll q$ in contradiction to $y \not  \in V$.    
\end{proof}

Now, we will prove the main result in this section.

\begin{thm}
\label{stronglythm}
	Let $\LLS$ and $\LLSY$ be two Lorentzian length spaces. If $\LLS$ is strongly causal, $(Y,\widetilde{d})$ is locally compact  and $f$ is a surjective distance homothetic map, then $f$ is a homeomorphism and $\LLSY$ is strongly causal. 
\end{thm}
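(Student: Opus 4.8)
The plan is to assemble the conclusion from the two technical lemmas already in hand and then upgrade the open bijection to a homeomorphism by producing a continuous inverse. First I would invoke Lemma \ref{lema:43}: since $\LLS$ is strongly causal and $f$ is a surjective distance homothetic map, $f$ is injective and open. Combined with surjectivity, $f$ is an open bijection, hence $f^{-1}$ exists as a map $Y \to X$, and openness of $f$ is exactly continuity of $f^{-1}$. The remaining ingredient for a homeomorphism is continuity of $f$ itself, equivalently openness of $f^{-1}$, i.e. that $f^{-1}$ is also a distance homothetic surjection to which the same lemma applies.

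The cleanest route is to show that the target $\LLSY$ is strongly causal \emph{first}, and then apply Lemma \ref{lema:43} with the roles of source and target reversed. So the key steps, in order, would be: (i) record that $f$ is an open bijection by Lemma \ref{lema:43}; (ii) transfer strong causality across $f$; (iii) apply Lemma \ref{lema:43} to the inverse map $g := f^{-1}$, which is a surjective (indeed bijective) distance homothetic map with constant $1/c$ from the now-strongly-causal space $\LLSY$ to $\LLS$, concluding that $g$ is open; (iv) openness of $g = f^{-1}$ means $f$ is continuous, and together with step (i) this gives the homeomorphism. For step (ii), I would use that strong causality of $\LLS$ says $\mathcal{A} = \mathcal{D}$ on $X$, and that by Lemma \ref{ftime} item (3) $f$ carries chronological diamonds $I^+(p)\cap I^-(q)$ to diamonds $I^+(f(p))\cap I^-(f(q))$; since $f$ is a bijection, it pushes the Alexandrov subbasis of $X$ onto the Alexandrov subbasis of $Y$, so $f$ is a homeomorphism from $(X,\mathcal{A})$ to $(Y,\widetilde{\mathcal{A}})$. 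One then needs $\widetilde{\mathcal{A}} = \widetilde{\mathcal{D}}$ on $Y$.

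The hard part will be precisely this identification $\widetilde{\mathcal{A}} = \widetilde{\mathcal{D}}$, because transporting the Alexandrov topology through $f$ only tells us that $(Y,\widetilde{\mathcal{A}})$ is homeomorphic to $(X,\mathcal{A}) = (X,\mathcal{D})$; it does not directly compare $\widetilde{\mathcal{A}}$ with the independently given metric topology $\widetilde{\mathcal{D}}$. This is where the local compactness hypothesis on $(Y,\widetilde{d})$ should enter, and I expect the argument to run through the earlier causal-ladder results rather than a bare point-set manipulation. Concretely, I would show $\LLSY$ is stably causal by transporting the relevant chronological/causal structure (antisymmetry of $\widetilde{K}^+$ follows since $f$ preserves $\ll$ and hence the closure-of-future characterizations), and then apply Proposition \ref{SCtoSC}, whose hypotheses are exactly ``$\LLSY$ stably causal and $(Y,\widetilde{d})$ locally compact,'' to deduce that $\LLSY$ is strongly causal, i.e. $\widetilde{\mathcal{A}} = \widetilde{\mathcal{D}}$.

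An alternative, more self-contained route for step (ii) avoids the full ladder: since $\widetilde{\mathcal{A}} \subseteq \widetilde{\mathcal{D}}$ always holds (chronological diamonds are $\widetilde{d}$-open by Proposition \ref{prop:Iopen}), it suffices to prove the reverse inclusion $\widetilde{\mathcal{D}} \subseteq \widetilde{\mathcal{A}}$, and for this the Sequence Lemma on $Y$ together with local compactness lets one trap a given $\widetilde{d}$-ball between chronological diamonds, exactly as in the proof of Proposition \ref{SCtoSC}. Whichever route is taken, once $\LLSY$ is strongly causal the endgame is immediate: $g = f^{-1}$ is a surjective distance homothetic map out of a strongly causal space, so Lemma \ref{lema:43} makes it open, whence $f$ is continuous; being already an open continuous bijection, $f$ is a homeomorphism and $\LLSY$ is strongly causal, which is the claim.
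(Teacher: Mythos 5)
Your overall skeleton agrees with the paper's: Lemma \ref{lema:43} makes $f$ an open bijection (so $f^{-1}$ is continuous), the real content is proving $\LLSY$ strongly causal, and once that is done applying Lemma \ref{lema:43} to $f^{-1}$ yields continuity of $f$ and hence the homeomorphism. The gap is in your step (ii), and neither of your two routes closes it. Route A asks for antisymmetry of $\widetilde{K}^+$ on $Y$, which you would have to import from antisymmetry of $K^+$ on $X$; but the hypothesis is only that $\LLS$ is \emph{strongly} causal, and on the causal ladder stable causality is strictly \emph{above} strong causality (Proposition \ref{SCtoSC} goes from stably causal to strongly causal, not conversely). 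Taking $f$ to be the identity on a strongly causal but not stably causal space shows Route A cannot work. It would moreover need $f$ to respect $J^+$, which in this paper requires the extra locally causally Lipschitz hypothesis not assumed in the theorem.

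Route B is essentially circular: ``trapping a $\widetilde{d}$-ball between chronological diamonds'' \emph{is} the statement $\widetilde{\mathcal{D}}\subseteq\widetilde{\mathcal{A}}$, and it does not follow from the Sequence Lemma plus local compactness alone --- otherwise every locally compact Lorentzian length space would be strongly causal. The proof of Proposition \ref{SCtoSC} does not produce such a trapping either; it is a contradiction argument whose punchline is a violation of $K^+$-antisymmetry, which is unavailable here. The missing idea, which is what the paper actually does, is to run the contradiction in $Y$ but derive the absurdity from \emph{injectivity of $f^{-1}$} rather than from a causality condition on $Y$: assuming strong causality fails at $p'\in Y$, one uses local compactness and the Sequence Lemma in $Y$ to produce points $z_n'$ on a sphere $\partial B^{\widetilde d}_{\delta_0}(p')$ with $r_n'\,\widetilde{\ll}\,z_n'\,\widetilde{\ll}\,q_n'$ and $z_n'\to z'\neq p'$; pulling back through the continuous map $f^{-1}$ and using causally convex neighborhoods of $p=f^{-1}(p')$ in the strongly causal space $X$, one forces $f^{-1}(z_n')\to p$, so $f^{-1}(z')=f^{-1}(p')$ with $z'\neq p'$, contradicting injectivity. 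You should replace step (ii) with this argument; your steps (i), (iii) and (iv) then go through as in the paper.
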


\begin{proof}

	By contradiction assume that $\LLSY$ is not strongly causal. Then, there exists an open subset $U' \in \mathcal{D}$ such that $A' \not \subseteq U'$ for all $A' \in \widetilde{\mathcal{A}}$, where $\widetilde{\mathcal{A}}$ is the Alexandrov topology in $Y$. Take $p' \in U'$ and consider an open neighborhood $B_{\delta_0}^{\widetilde{d}}(p')$ such that $\overline{B_{\delta_0}^{\widetilde{d}}(p')}$ is compact and contained in $U'$. \cambios{Let us consider} a localizing neighborhood $\Omega_{p'}$ with $I^{\pm}(p') \cap \Omega_{p'} \neq \emptyset$. By the Sequence Lemma  there are two sequences $\{r_{n}'\}$ and $\{q_{n}'\}$ in $B_{\delta_0}^{\widetilde{d}}(p')$ converging to $p'$, such that $r_{n}' \widetilde{\ll} p' \widetilde{\ll} q_{n}'$, with $r_n' \widetilde{\ll} r_{n+1}'$ and $q_{n+1}' \widetilde{\ll} q_n'$, $\forall n\in\mathbb{N}$ for all $n$. Therefore, $p' \in I^{+}(r_{n}') \cap I^{-}(q_{n}') \not \subset U'$ and as a consequence $I^{+}(r_{n}') \cap I^{-}(q_{n}') \not \subset \overline{B_{\delta_0}^{\widetilde{d}}(p')}$ for all $n$. Therefore, there exists $w_{n}' \in I^{+}(r_{n}') \cap I^{-}(q_{n}')$ with $w_n' \not \in \overline{B_{\delta_0}^{\widetilde{d}}(p')}$. Since \cambios{$\LLSY$} is causally path connected we have that for each $n$ there exists a future directed causal curve $\gamma_n : [a_n,b_n] \rightarrow Y$ with $\gamma_{n}([a_n,b_n]) \cap \partial B_{\delta_0}^{\widetilde{d}}(p') \neq \emptyset$; this can be seen by taking the future directed timelike curve that goes through $r_{n}',w_{n}'$ and $q_n '$. Take $z_{n}' \in \gamma_{n}([a_n,b_n]) \cap \partial B_{\delta_0}^{\widetilde{d}}(p')$ the last point of entry and note that $r_{n}' \widetilde{\ll} z_{n}' \widetilde{\ll} q_{n}'$.

    The compactness of $\partial B_{\delta_0}^{\widetilde{d}}(p')$ implies that $\{z_{n}'\} \subset \partial B_{\delta_0}^{\widetilde{d}}(p')$ converges to some point $z' \in \partial B_{\delta_0}^{\widetilde{d}}(p')$. The continuity of $f^{-1}:Y \rightarrow X$ implies that the sequences $\{f^{-1}(r_n')\}$ and $\{f^{-1}(q_n')\}$ converge to $p=f^{-1}(p')$ and that $\{f^{-1}(z_n')\}$ converges to $f^{-1}(z')$. Moreover, $f^{-1}(r_n') \ll p \ll f^{-1}(q_n')$ for all $n$. Let us prove that $\{f^{-1}(z_n')\}$ converges to $p$ as well, this will provide a contradiction to the injectivity of $f^{-1}$. Let $U$ be an open neighborhood of $p$, therefore by the strong causality condition of \cambios{$\LLS$} there exists a causally convex neighborhood $V$ of $p$ and contained in $U$. Convergence to $p$ of the sequences $\{f^{-1}(r_n')\}$ and $\{f^{-1}(q_n')\}$ implies that $f^{-1}(r_n')$ and $f^{-1}(q_n')$ are in $V$ for large $n$, so, $f^{-1}(z_n') \in I^{+}(f^{-1}(r_n')) \cap I^{-}(f^{-1}(q_n'))=f^{-1}(I^{+}(r_n') \cap I^{-}(q_n')) \subset V \subset U$ for large $n$, where the first inclusion is due to the causal convexity of $V$. Therefore, $\{f^{-1}(z_n')\}$ also converges to $p=f^{-1}(p')$ and therefore $f^{-1}(z')=p=f^{-1}(p')$ with $p' \neq z'$, in contradiction to the injectivity of $f^{-1}$. We can conclude that $\LLSY$ is strongly causal. Moreover, by applying Lemma \ref{lema:43} to $f^{-1}$ we obtain that $f$ is also continuous, so the distance homothetic map $f:X \rightarrow Y$ is a homeomorphism.  
\end{proof}

As a corollary we have that causal continuity is also induced by distance homothetic maps.

\begin{cor}
Let $\LLS$ and $\LLSY$ be two Lorentzian length spaces. If $\LLS$ is causally continuous, $f$ is a surjective distance homothetic map and $(Y,\widetilde{d})$ is locally compact, then $\LLSY$ is causally continuous. 
\end{cor}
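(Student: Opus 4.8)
The plan is to reduce the statement to the two defining properties of causal continuity for $\LLSY$, namely that it is distinguishing (Definition \ref{defi:distinguishing}) and reflective, and to transport each of them across $f$ using the identity $f(I^{\pm}(p))=I^{\pm}(f(p))$ from Lemma \ref{ftime} together with bijectivity of $f$. The whole argument hinges on $f$ being a bijection, so the first task is to secure injectivity.

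First I would upgrade the hypothesis on $\LLS$. Since $\LLS$ is causally continuous, Proposition \ref{CCtoSC} gives that it is stably causal, and the causal ladder of Theorem \ref{causalthm} then yields that $\LLS$ is strongly causal. With $\LLS$ strongly causal, $(Y,\widetilde{d})$ locally compact and $f$ surjective and distance homothetic, Theorem \ref{stronglythm} applies: $f$ is a homeomorphism; in particular, combining surjectivity with the injectivity furnished by Lemma \ref{lema:43}, $f$ is a bijection preserving chronology in both directions.

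With $f$ bijective, the transfer is direct. For the distinguishing condition, suppose $x',y'\in Y$ satisfy $I^{+}(x')=I^{+}(y')$ and write $x'=f(x)$, $y'=f(y)$. By Lemma \ref{ftime} this reads $f(I^{+}(x))=f(I^{+}(y))$, and injectivity gives $I^{+}(x)=I^{+}(y)$; since $\LLS$ is distinguishing we conclude $x=y$, hence $x'=y'$, and the past case is identical. For reflectivity I would verify, say, the past reflective condition $I^{+}(x')\subset I^{+}(y')\Rightarrow I^{-}(y')\subset I^{-}(x')$. Again writing $x'=f(x)$, $y'=f(y)$, Lemma \ref{ftime} and injectivity turn the hypothesis into $I^{+}(x)\subset I^{+}(y)$; reflectivity of $\LLS$ gives $I^{-}(y)\subset I^{-}(x)$, and applying $f$ and Lemma \ref{ftime} once more returns $I^{-}(y')\subset I^{-}(x')$. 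The future reflective condition is handled symmetrically, so $\LLSY$ is distinguishing and reflective, i.e.\ causally continuous.

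The main obstacle is the bookkeeping in the first step: Proposition \ref{SCtoSC} derives strong causality from stable causality only when the underlying metric space is locally compact, yet the hypothesis supplies local compactness of $(Y,\widetilde{d})$ rather than of $(X,d)$. I would address this by observing that once $f$ is known to be a homeomorphism the two local compactness properties coincide, so the cleanest route is to invoke Theorem \ref{stronglythm} directly (whose proof only needs strong causality of $\LLS$ to produce the homeomorphism) and to justify strong causality of $\LLS$ as part of the causal hierarchy; care must be taken that this does not become circular, which is why the strong-causality reduction, rather than the routine transfer of the two conditions, is the delicate point.
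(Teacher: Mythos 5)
Your proposal is correct and follows essentially the same route as the paper: establish strong causality of $\LLS$, invoke Theorem \ref{stronglythm} to conclude that $f$ is a homeomorphism and that $\LLSY$ is strongly causal, and then transfer reflectivity back and forth through $f$ and $f^{-1}$ using Lemma \ref{ftime}. The only substantive difference lies at the point you yourself flag as delicate. The paper simply writes ``since $\LLS$ is causally continuous, it is strongly causal'' with no further comment; your attempt to justify this via Proposition \ref{CCtoSC} followed by Theorem \ref{causalthm} runs into exactly the obstruction you identify, because the arrow from stable causality to strong causality (Proposition \ref{SCtoSC}) requires $(X,d)$ to be locally compact, which is not among the hypotheses of the corollary, and your suggested repair --- transporting local compactness from $(Y,\widetilde{d})$ back to $(X,d)$ through $f$ --- is circular, since the homeomorphism only becomes available after strong causality of $\LLS$ is known. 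So this step is a genuine gap, but it is one shared with (indeed inherited from) the paper's own proof rather than one you introduced; it would be closed by adding local compactness of $(X,d)$ to the hypotheses, or by proving directly that a distinguishing and reflective Lorentzian length space is strongly causal, as holds for smooth spacetimes. A minor cosmetic difference: you verify the distinguishing condition for $\LLSY$ explicitly via injectivity of $f$, whereas the paper absorbs it into strong causality of $\LLSY$.
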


\begin{proof}
	Since $\LLS$ is causally continuous, it is strongly causal. Theorem \ref{stronglythm} implies that $f$ is a homeomorphism and hence $\LLSY$ is also strongly causal. Thus, we only have to show that $\LLSY$ is past and future reflecting. We will prove that $\LLSY$ is past reflecting (the future reflecting case is analogous), this is, let $q' \in Y$  and we will show that $I^{+}(p') \supset I^{+}(q') \Rightarrow I^{-}(p') \subset I^{-}(q')$ for all $p'$. Taking images under the inverse map $f^{-1}$ we have
	\[
	f^{-1}(I^{+}(p') \supset f^{-1}(I^{+}(q'),
	\]
    since $f^{-1}$ is distance homothetic we obtain
	\[
	I^{+}(f^{-1}(p')) \supset I^{+}(f^{-1}(q')). 
	\]
	By hypothesis we have that $\LLS$ is causally continuous, so the previous inclusion implies that $I^{-}(f^{-1}(p')) \subset I^{-}(f^{-1}(q'))$. Therefore, by taking images under $f$ and using Lemma \ref{ftime} we have that 
	\[
		f(I^{-}(f^{-1}(p'))) \subset f(I^{-}(f^{-1}(q'))) \Leftrightarrow  I^{-}(p') \subset I^{-}(q').  
	\]
	Then, $\LLSY$ is past reflecting at $q'$ and a similar reasoning shows that $\LLSY$ is future reflecting at $q'$. Therefore, $\LLSY$ is causally continuous. 
\end{proof}

As we move on to proving that  causal simplicity and global hyperbolicity are preserved under distance homothetic maps, we have to ensure that $f(J^{\pm}(x))\subset J^{\pm}(f(x))$ for all $x \in X$. This will be achieved if we assume the following mild condition on the distance homothetic map:  

\begin{defi}
\label{lcl}
A distance homothetic map $f:\LLS \rightarrow \LLSY$ is called locally causally Lipschitz if for all $x \in X$ there exist an open neighborhood $U \ni x$ and $M>0$ such that $\widetilde{d}(f(x_1),f(x_2)) \leq M d(x_1,x_2)$ for all $x_1 ,x_2 \in U$ with $x_1 \leq x_2$.  
\end{defi}

\begin{prop}
\label{fcausalcurv}
Let $\LLS$ and $\LLSY$ be two Lorentzian length spaces and $f:X \rightarrow Y$ be a surjective homothetic distance map that is locally causally Lipschitz. If $\LLS$ is strongly causal and $(Y,\widetilde{d})$ is locally compact, then $f$ maps causal curves to causal curves. As a consequence, \cambios{$f(J^{\pm}(x))\subset J^{\pm}(f(x))$} for all $x \in X$.
\end{prop}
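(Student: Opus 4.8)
The plan is to prove two assertions: first, that $f$ maps causal curves to causal curves, and second, that this yields $f(J^{\pm}(x))\subset J^{\pm}(f(x))$. The second follows almost immediately from the first: if $y\in J^+(x)$ then by causal path connectedness of $\LLS$ there is a future directed causal curve $\gamma$ from $x$ to $y$; if $f\circ\gamma$ is a future directed causal curve from $f(x)$ to $f(y)$, then $f(x)\widetilde{\le}f(y)$, i.e.\ $f(y)\in J^+(f(x))$. (When $x=y$ the claim is trivial.) So the real content is the first assertion.

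To show $f$ maps a future directed causal curve $\gamma:[a,b]\to X$ to a future directed causal curve in $Y$, I would verify the three defining requirements of Definition \ref{defi:causalcurve} for the composite $f\circ\gamma$: that it is non-constant, Lipschitz continuous, and future directed with respect to $\widetilde{\le}$. The causal-ordering property is the easiest: for $s<t$ we have $\gamma(s)\le\gamma(t)$, so $f(\gamma(s))\le f(\gamma(t))$ by the observation following Proposition \ref{fcausalcurv}'s hypotheses—more precisely, since $f$ maps $\le$-related points to $\widetilde{\le}$-related points, which is exactly what the curve image statement is designed to package (and for the $\ll$ relation this is Lemma \ref{ftime}(1)). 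Non-constancy follows because $f$ is injective (it is a homeomorphism by Theorem \ref{stronglythm}, using strong causality of $\LLS$ and local compactness of $(Y,\widetilde d)$) and $\gamma$ is non-constant. The crux is Lipschitz continuity of $f\circ\gamma$, and this is precisely why the locally causally Lipschitz hypothesis (Definition \ref{lcl}) was introduced.

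For the Lipschitz estimate I would argue locally and then patch. Fix $t_0\in[a,b]$, set $x_0=\gamma(t_0)$, and take the neighborhood $U\ni x_0$ and constant $M>0$ furnished by Definition \ref{lcl}. By continuity of $\gamma$ there is a subinterval $[s_1,s_2]\ni t_0$ with $\gamma([s_1,s_2])\subset U$. For $s<t$ in this subinterval the points $\gamma(s),\gamma(t)$ lie in $U$ and satisfy $\gamma(s)\le\gamma(t)$, so the locally causally Lipschitz condition gives
\[
\widetilde d\bigl(f(\gamma(s)),f(\gamma(t))\bigr)\le M\,d\bigl(\gamma(s),\gamma(t)\bigr)\le M\,L\,|t-s|,
\]
where $L$ is a Lipschitz constant for $\gamma$. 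Thus $f\circ\gamma$ is Lipschitz on $[s_1,s_2]$. Covering the compact interval $[a,b]$ by finitely many such subintervals and taking the maximum of the resulting constants shows $f\circ\gamma$ is (locally, hence globally on $[a,b]$) Lipschitz continuous. Since $f\circ\gamma$ is continuous it is in particular the image curve we want.

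The main obstacle I anticipate is the interplay between the causal-ordering hypothesis in Definition \ref{lcl} and the full range of pairs appearing along the curve: the Lipschitz bound is only assumed for $\le$-related pairs inside $U$, so the argument must be organized so that the estimate is only ever invoked on pairs $\gamma(s)\le\gamma(t)$ with both endpoints in a single neighborhood $U$—which is exactly what the local subinterval covering achieves. A secondary point to handle with care is the timelike case: if $\gamma$ is in fact future directed timelike, one should additionally note via Lemma \ref{ftime}(1) that $f\circ\gamma$ remains timelike, so that timelike curves are sent to timelike curves; but for the stated conclusion about $J^{\pm}$ only the causal case is strictly needed.
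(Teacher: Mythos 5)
Your reduction of the inclusion $f(J^{\pm}(x))\subset J^{\pm}(f(x))$ to the curve statement is fine, and your treatment of the Lipschitz continuity of $f\circ\gamma$ via Definition \ref{lcl}, the Lipschitz constant of $\gamma$, and a finite cover of $[a,b]$ is essentially the paper's argument. But there is a genuine gap at the step you call ``the easiest'': you assert that $\gamma(s)\le\gamma(t)$ implies $f(\gamma(s))\,\widetilde{\le}\,f(\gamma(t))$ because ``$f$ maps $\le$-related points to $\widetilde{\le}$-related points.'' That statement is not a hypothesis and does not follow from $f$ being distance homothetic; it is, in essence, the conclusion of the proposition. A homothety only controls $\tau$, and $\tau$ only detects the chronological relation: $\tau(p,q)>0\Leftrightarrow p\ll q$, which is why Lemma \ref{ftime} gives preservation of $\ll$ but says nothing about $\le$. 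For a null-related pair $p\le q$ with $p\not\ll q$ one has $\tau(p,q)=0$, hence $\widetilde{\tau}(f(p),f(q))=0$, which is perfectly consistent with $f(p)$ failing to be $\widetilde{\le}$-related to $f(q)$. So the causal-ordering property of $f\circ\gamma$ is the crux of the proof, not a formality.

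The paper closes this gap with a local limiting argument that your proposal is missing. Around each $f(\gamma(t))$ one takes a causally closed neighborhood $W_t$ (Definition \ref{loccaucl}) and, using strong causality of $\LLSY$ (which follows from Theorem \ref{stronglythm}), a neighborhood $V_t\subset W_t$ such that causal curves with endpoints in $V_t$ stay in $W_t$. For $t_1<t_2$ with both image points in $V_t$, one chooses a sequence $y_n\to\gamma(t_2)$ with $y_n\in I^{+}(\gamma(t_2))$; push-up gives $\gamma(t_1)\ll y_n$, so Lemma \ref{ftime} yields $f(\gamma(t_1))\,\widetilde{\ll}\,f(y_n)$, and the choice of $V_t$ places these relations inside $W_t$, i.e.\ $f(\gamma(t_1))\,\widetilde{\le}_{W_t}f(y_n)$. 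Letting $n\to\infty$ and using that $\widetilde{\le}_{W_t}$ is closed gives $f(\gamma(t_1))\,\widetilde{\le}\,f(\gamma(t_2))$ locally; a Lebesgue-number argument then propagates the relation along all of $[a,b]$ by transitivity. Without some version of this step your proof does not establish that $f\circ\gamma$ is causal.
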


\begin{proof}
First we have to check that $f \circ \gamma:[a,b] \rightarrow Y$ is  locally Lipschitz continuous with respect to $\widetilde{d}$. In order to do this let $t \in [a,b]$ and consider an open neighborhood $U_{t}$ for $\gamma(t)$ as in Definition \ref{lcl}. Take $t_1, t_{2} \in \gamma^{-1}(U_{t})$ and assume without loss of generality that $t_1 < t_2$, since $\gamma$ is a future directed causal curve we have that $\gamma(t_1) \leq \gamma(t_2)$ in $U_t$. Then, $f$ being a locally causally Lipschitz distance homothetic map on $U_t$ implies that 
\cambios{$$\widetilde{d}(f(\gamma(t_1)),f(\gamma(t_2))) \leq M_t d(\gamma(t_1),\gamma(t_2)) \leq M_t \cdot L |t_2-t_1|,$$} 
where $M_t$ is the Lipschitz constant in $U_{t}$ and $L$ is the Lipschitz constant of $\gamma$. Therefore, $f\circ \gamma:[a,b] \rightarrow Y$ is locally Lipschitz with respect to the metric $\widetilde{d}$. 

Now, we will show that for any future directed causal curve $\gamma:[a,b] \rightarrow X$ the curve $f \circ \gamma :[a,b] \rightarrow Y$ is also future directed causal.  For each $t \in [a,b]$ consider a causally closed neighborhood $(W_t,\widetilde{\le}_{W_{t}})$ for $f(\gamma(t))$. By Theorem \ref{stronglythm}, we know that $\LLSY$ is strongly causal, then there exists a neighborhood $V_{t} \subset W_{t}$ of $f(\gamma(t))$ such that any causal curve with endpoints in $V_{t}$ is entirely contained in $W_{t}$. 
The continuity of $f$ and $\gamma$ implies that $(f\circ \gamma)^{-1}(V_t)$ is an open neighborhood of $t$. Let us prove the following claim: if $t_1,t_2 \in (f\circ \gamma)^{-1}(V_t)$ with $t_1<t_2$, then $f(\gamma(t_1)) \leq f(\gamma(t_2))$.  
To do so, take $t_1<t_2$ in $(f\circ \gamma)^{-1}(V_t)$
and note that $\gamma(t_1) \leq \gamma(t_2)$, since $\gamma$ is a future directed causal curve in $X$. Consider a sequence $\{y_n\} \subset I^{+}(\gamma(t_2))$ with $\gamma(t_1) \ll y_n$, $y_{n} \rightarrow y$ and $y_{n} \in f^{-1}(V_{t})$. By Lemma \ref{ftime} we have that $f(\gamma(t_1)) \widetilde{\ll} f(y_n)$ and observe that $f(\gamma(t_{1})),f(\gamma(t_{2})), f(y_{n}) \in V_{t} \subset W_{t}$. Thus, by the way we chose $V_{t}$ we have that $\gamma(t_{1}) \widetilde{\le}_{W_{t}} f(y_{n})$. Since $f$ is a homeomorphism we have $f(y_{n}) \rightarrow f(\gamma(t_{2}))$. Then, the fact that $\widetilde{\le}_{W_{t}}$ is closed implies that $f(\gamma(t_{1})) \widetilde{\le}_{W_{t}} f(\gamma(t_{2}))$ and thus $f(\gamma(t_{1})) \widetilde{\le} f(\gamma(t_{2}))$.

Consider the open covering $\{(f \circ \gamma)^{-1}(V_t)\}_{t\in [a,b]}$ of $[a,b]$, then there exists a Lebesgue number $\delta > 0$ for this covering. To prove that $f\circ \gamma$ is a causal curve, let $t_1, t_2 \in [a,b]$ with $t_1<t_2$, then we have two cases: 
\begin{enumerate}
\item If $t_2-t_1<\delta$, then, $[t_1,t_2] \subset (f \circ \gamma)^{-1}(V_t)$ for some $t$. Therefore, $f(\gamma(t_1)) \widetilde{\leq} f(\gamma(t_2))$.
\item If $t_{2}-t_1 > \delta$, then, there exists a sequence  $\{t_1,t_1+\delta/2,...,t_1+m_0 \frac{\delta}{2},t_2\}$ with 
$f(\gamma(t_1)) \widetilde{\leq} f(\gamma(t_1+\delta/2)) \widetilde{\leq} ... \widetilde{\leq} f(\gamma(t_1 +m_0 \frac{\delta}{2})) \widetilde{\leq} f(\gamma(t_2))$.   
\end{enumerate}             
In both cases we have $f(\gamma(t_1)) \widetilde{\leq} f(\gamma(t_2))$, hence $f \circ \gamma:[a,b] \rightarrow Y$ is a future directed causal curve. The inclusion $f(J^{+}(x)) \subset J^{+}(f(x))$ for all $x \in X$ readily follows.
\end{proof}

\begin{prop}\label{prop:CSHM}
Let $\LLS$ and $\LLSY$ be two Lorentzian length spaces, and $f$ a surjective homothetic distance map that is locally causally Lipschitz. If $\LLS$ is  causally simple and $(Y,\widetilde{d})$ is locally compact, then $\LLSY$ is causally simple as well. 
\end{prop}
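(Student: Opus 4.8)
The plan is to reduce everything to the machinery already developed for strongly causal spaces, together with the rigidity that causal simplicity imposes, namely the identity $\overline{I^{\pm}}=J^{\pm}$. First I would note that since $\LLS$ is causally simple it is causally continuous (Proposition \ref{CStoCC}) and hence stably causal (Proposition \ref{CCtoSC}); as in the corollary preceding this statement we may therefore treat $\LLS$ as strongly causal. This lets me invoke Theorem \ref{stronglythm}, which upgrades $f$ to a homeomorphism and shows that $\LLSY$ is strongly causal. In particular $\LLSY$ inherits causality from the basic causal ladder of \cite{KS} (strongly causal $\Rightarrow$ causal), which settles the first of the two requirements in the definition of causal simplicity.

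It then remains to show that $J^{\pm}(y')$ is closed for every $y'\in Y$. The key observation is that in the causally simple space $\LLS$ one has, by Lemma \ref{lemma:cl} together with closedness of $J^{\pm}$, the identity $\overline{I^{\pm}(x)}=J^{\pm}(x)$ for all $x\in X$. Writing $y'=f(x)$ (possible by surjectivity) and using that $f$ is a homeomorphism, so that $f(\overline{A})=\overline{f(A)}$ for every set $A$, together with Lemma \ref{ftime}(2), I would compute
\[
\overline{I^{+}(f(x))}=\overline{f(I^{+}(x))}=f\bigl(\overline{I^{+}(x)}\bigr)=f\bigl(J^{+}(x)\bigr).
\]
Proposition \ref{fcausalcurv}, whose hypotheses are precisely those assumed here, gives $f(J^{+}(x))\subset J^{+}(f(x))$, so the chain above yields $\overline{I^{+}(f(x))}\subset J^{+}(f(x))$. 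On the other hand Lemma \ref{lemma:cl}, applied now in the Lorentzian length space $Y$, gives $J^{+}(f(x))\subset\overline{J^{+}(f(x))}=\overline{I^{+}(f(x))}$. Combining the two inclusions forces $J^{+}(f(x))=\overline{I^{+}(f(x))}$, which is closed; the past case is identical, and surjectivity of $f$ delivers the conclusion for every point of $Y$. Hence $\LLSY$ is causal with closed causal cones, i.e.\ causally simple.

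The step I expect to carry the real weight is the inclusion $f(J^{\pm}(x))\subset J^{\pm}(f(x))$ supplied by Proposition \ref{fcausalcurv}; this is exactly where the \emph{locally causally Lipschitz} hypothesis enters, since it is what guarantees that $f$ sends causal curves to causal curves and so respects the causal (not merely chronological) relation in the forward direction. Everything else is a formal manipulation of closures, which is legitimate only because Theorem \ref{stronglythm} has made $f$ a genuine homeomorphism; without that the interchange $f(\overline{A})=\overline{f(A)}$ would fail. The one point to keep honest is the passage ``causally simple $\Rightarrow$ strongly causal'' used at the outset, which I would justify exactly as in the corollary immediately preceding this proposition.
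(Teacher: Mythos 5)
Your proof is correct and follows essentially the same route as the paper's: both rest on Theorem \ref{stronglythm} to make $f$ a homeomorphism with $\LLSY$ strongly causal, use Proposition \ref{fcausalcurv} for $f(J^{\pm}(x))\subset J^{\pm}(f(x))$, and obtain the reverse inclusion from $\overline{I^{\pm}(x)}=J^{\pm}(x)$ in the causally simple source. The only cosmetic difference is that the paper unpacks your identity $f(\overline{I^{+}(x)})=\overline{f(I^{+}(x))}$ into an explicit sequence argument; the content, including the implicit reliance on the causal ladder for ``causally simple $\Rightarrow$ strongly causal,'' is the same.
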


\begin{proof}
By Theorem \ref{stronglythm} we know that $\LLSY$ is strongly causal, so in order to show that $\LLSY$ is causally simple it is enough to prove that $f(J^{\pm}(x))=J^{\pm}(f(x))$ for all $x \in X$. Let $x \in X$ and take $q' \in J^{+}(f(x))$, then there exists a sequence $\{q_{n}'\} \subset I^{+}(f(x))$ with $q_{n}' \rightarrow q'$. Note that $x \ll f^{-1}(q_{n}')$ and $f^{-1}(q_{n}') \rightarrow f^{-1}(q')$ since $f^{-1}$ is a homothetic distance map and a homeomorphism. Observe that $f^{-1}(q') \in \overline{I^+(x)}=J^{+}(x)$ as $\LLS$ is causally simple, so $x \leq f^{-1}(q')$. Thus, $J^{+}(f(x)) \subset f(J^{+}(x))$ and by Proposition \ref{fcausalcurv}  the equality of sets holds. 
From the last equality and the fact that $f$ is a homeomorphism we have that 
\[
J^{\pm}(p')=J^{\pm}(f(f^{-1}(p')))=f(J^{+}(f^{-1}(p')))
\]
and this implies that $J^{\pm}(p')$ is a closed subset since $J^{+}(f^{-1}(p'))$ is a closed subset in $X$ and $f$ a homeomorphism. Then, $\LLSY$ is causally simple.  
\end{proof}

\begin{prop}
Let $\LLS$ and $\LLSY$ be two Lorentzian length spaces, and $f$ a surjective homothetic distance map that is locally causally Lipschitz. If $\LLS$ is globally hyperbolic and $(Y,\widetilde{d})$ is locally compact, then $\LLSY$ is also globally hyperbolic. 
\end{prop}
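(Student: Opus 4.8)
The plan is to assemble the preceding results rather than argue from scratch, since the two defining ingredients of global hyperbolicity---being non-totally imprisoning and having compact causal diamonds---can each be transferred through $f$ using machinery already in place. First I would record that a globally hyperbolic Lorentzian length space is strongly causal (this is the top of the Kunzinger--S\"amann ladder \cite{KS}, reflected in the hierarchy preceding this section), so the hypotheses of Theorem \ref{stronglythm} are met: with $\LLS$ strongly causal, $(Y,\widetilde{d})$ locally compact, and $f$ a surjective distance homothetic map, Theorem \ref{stronglythm} yields that $f$ is a homeomorphism and that $\LLSY$ is strongly causal. Since strong causality implies the non-totally imprisoning condition \cite{KS}, the first half of global hyperbolicity for $\LLSY$ comes for free.

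For the compactness of causal diamonds I would exploit that global hyperbolicity descends to causal simplicity (Proposition \ref{GHtoCS}), so Proposition \ref{prop:CSHM} applies directly: $\LLSY$ is causally simple, and---more importantly for this argument---the set identity $f(J^{\pm}(x))=J^{\pm}(f(x))$ holds for every $x\in X$ (this is the content of Propositions \ref{fcausalcurv} and \ref{prop:CSHM}). Fixing $p',q'\in Y$ and using that $f$ is a bijection, I would write $p'=f(p)$, $q'=f(q)$ and compute
\[
J^{+}(p')\cap J^{-}(q')=f(J^{+}(p))\cap f(J^{-}(q))=f\big(J^{+}(p)\cap J^{-}(q)\big),
\]
where the last equality is where the injectivity of $f$ is essential. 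Because $\LLS$ is globally hyperbolic, $J^{+}(p)\cap J^{-}(q)$ is compact in $X$, and since $f$ is continuous its image is compact in $Y$; hence every causal diamond of $\LLSY$ is compact.

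Combining the two parts, $\LLSY$ is non-totally imprisoning with compact causal diamonds, that is, globally hyperbolic. The step I expect to be most delicate is not the compactness transfer itself---which is merely continuity applied to a compact set---but the justification that intersections commute with $f$, namely $f(A\cap B)=f(A)\cap f(B)$ together with $f(J^{\pm}(x))=J^{\pm}(f(x))$; both rest on the bijectivity of $f$ and on the causal simplicity of $\LLS$, and these are precisely what Theorem \ref{stronglythm} and Proposition \ref{prop:CSHM} have already secured.
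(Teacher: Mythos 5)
Your proposal is correct and follows essentially the same route as the paper: reduce to causal simplicity of $\LLSY$ (hence non-total imprisonment via the ladder) using Proposition \ref{prop:CSHM}, then transfer compactness of causal diamonds through the identity $J^{\pm}(p')=f(J^{\pm}(f^{-1}(p')))$ together with the injectivity and continuity of the homeomorphism $f$ from Theorem \ref{stronglythm}. The only cosmetic difference is that you obtain non-total imprisonment from strong causality of $\LLSY$ directly, while the paper deduces it from causal simplicity of $\LLSY$; both are valid readings of the same hierarchy.
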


\begin{proof}
We have that $\LLS$ is globally hyperbolic, then it is causally simple. Therefore, $\LLSY$ is causally simple by Proposition \ref{prop:CSHM}, and thus it is non totally imprisoning as well (see Theorem \ref{causalthm}). Note that by Theorem \ref{stronglythm} we have that $f$ is a homeomorphism. It remains to show that $J^{+}(p') \cap J^{-}(q')$ is compact for all $p',q' \in Y$, so let $p',q' \in Y$ and note that 
\begin{align*}
J^{+}(p') \cap J^{-}(q') &=f(J^{+}(f^{-1}(p'))) \cap f(J^-(f^{-1}(q'))) \\
&=f(J^{+}(f^{-1}(p')) \cap J^-(f^{-1}(q'))),
\end{align*}
where the first equality \cambios{follows} by the proof of Proposition \ref{prop:CSHM}  and the second equality is given by the injectivity of $f$. Moreover, $J^{+}(p') \cap J^{-}(q')$
is compact since $J^{+}(f^{-1}(p')) \cap J^-(f^{-1}(q'))$ is compact in $X$ and $f$ is a homeomorphism. Therefore, $\LLSY$ 
is globally hyperbolic.
\end{proof}

\begin{prop}
Let $\LLS$ and $\LLSY$ be two Lorentzian length spaces and $f$ a surjective homothetic distance map that is locally causally Lipschitz. If $\LLS$ is stably causal, then $\LLSY$ is stably causal as well.
\end{prop}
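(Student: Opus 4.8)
The plan is to promote $f$ to a genuine \emph{causal homeomorphism} and then to observe that stable causality, being defined in Definition \ref{defi:SC} purely through the relation $K^+$ of Definition \ref{defi:K} (a causal-plus-topological datum), is transported across any such map. First I would record that $f$ is bijective. Surjectivity is assumed, and injectivity will follow once I check that a stably causal Lorentzian length space is distinguishing: if $I^+(x)=I^+(y)$, then by Proposition \ref{prop:clI} we have $y\in\overline{I^+(x)}$ and $x\in\overline{I^+(y)}$; since $J^+(x)\subset K^+(x):=\{w\mid(x,w)\in K^+\}$ and the section $K^+(x)$ is closed (it is the preimage of the closed set $K^+$ under the continuous map $w\mapsto(x,w)$), Lemma \ref{lemma:cl} gives $\overline{I^+(x)}=\overline{J^+(x)}\subset K^+(x)$, so $(x,y),(y,x)\in K^+$, and antisymmetry forces $x=y$; the past case is identical. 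Hence if $f(x)=f(y)$, Lemma \ref{ftime}(1) yields $I^{\pm}(x)=I^{\pm}(y)$ and therefore $x=y$, so $f$ is injective and $f^{-1}$ is again a surjective distance homothety.

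Next I would upgrade $f$ to a causal homeomorphism. By Lemma \ref{ftime} both $f$ and $f^{-1}$ preserve $\ll$ in both directions. The locally causally Lipschitz hypothesis feeds Proposition \ref{fcausalcurv}, which (applied to $f$, and symmetrically to $f^{-1}$) shows that $f$ carries the full causal relation across, i.e. $x\le y\iff f(x)\,\widetilde{\le}\,f(y)$; and Theorem \ref{stronglythm} makes $f$ a homeomorphism with $\LLSY$ strongly causal. Consequently $(f\times f)\colon X\times X\to Y\times Y$ is a homeomorphism that carries $J^+$ onto $\widetilde{J}^+$.

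Finally I would transfer $K^+$ itself. Because $\widetilde{K}^+$ is by Definition \ref{defi:K} the smallest closed and transitive relation containing $\widetilde{J}^+$, and because the homeomorphic bijection $f\times f$ interchanges closed sets with closed sets, transitive relations with transitive relations, and $J^+$ with $\widetilde{J}^+$, it follows that $(f\times f)(K^+)=\widetilde{K}^+$. Antisymmetry then passes to the bijective image: if $(a',b'),(b',a')\in\widetilde{K}^+$, their unique $f$-preimages satisfy $(f^{-1}(a'),f^{-1}(b')),(f^{-1}(b'),f^{-1}(a'))\in K^+$, so $f^{-1}(a')=f^{-1}(b')$ and hence $a'=b'$. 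Thus $\widetilde{K}^+$ is antisymmetric and $\LLSY$ is stably causal.

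The genuinely delicate point is the middle step. A bare distance homothety controls only $\tau$, hence only the \emph{chronological} relation $\ll$; promoting this to control of the \emph{causal} relation $\le$ (that is, $f(J^{\pm})\subset J^{\pm}$ and its reverse) together with control of the metric topology is where all the real work sits, and it is precisely what Proposition \ref{fcausalcurv} and Theorem \ref{stronglythm} supply. Since those tools rest on strong causality together with local compactness, one must also note that a stably causal, locally compact space is strongly causal by Theorem \ref{causalthm} and Proposition \ref{SCtoSC}, so the standing framework does provide the causal homeomorphism; once that is in hand, the remaining transfer of $K^+$ and of its antisymmetry is formal.
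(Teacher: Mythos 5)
Your first step (injectivity via ``stably causal $\Rightarrow$ distinguishing'', using that $K^+(x)$ is a closed set containing $\overline{J^{+}(x)}=\overline{I^{+}(x)}$) is correct and is in fact cleaner than the route through Lemma \ref{lema:43}. The genuine gap is in your middle step: you claim that Proposition \ref{fcausalcurv} applied ``to $f$, and symmetrically to $f^{-1}$'' yields the two-sided equivalence $x\le y\iff f(x)\,\widetilde{\le}\,f(y)$, and your final step leans on exactly this, since the inclusion $\widetilde{K}^{+}\subset (f\times f)(K^{+})$ (the one that actually transfers antisymmetry) requires $(f\times f)(K^{+})$ to \emph{contain} $\widetilde{J}^{+}$, i.e.\ requires the reverse implication $f(x)\,\widetilde{\le}\,f(y)\Rightarrow (x,y)\in K^{+}$. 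But Proposition \ref{fcausalcurv} cannot be applied to $f^{-1}$: nothing in the hypotheses makes $f^{-1}$ locally causally Lipschitz (this property does not pass to inverses), and its application would moreover need $(X,d)$ locally compact, which is not assumed. So only $f(J^{+})\subset\widetilde{J}^{+}$ is available, $(f\times f)(K^{+})$ is only known to contain $(f\times f)(J^{+})$ rather than all of $\widetilde{J}^{+}$, and the identity $(f\times f)(K^{+})=\widetilde{K}^{+}$ is not established. (Your closing remark also quietly adds local compactness of $(X,d)$ to reach strong causality via Proposition \ref{SCtoSC}; that is an extra hypothesis not present in the statement.)

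The gap is repairable, and the repair is precisely what the paper's proof does: one does not need $f^{-1}(\widetilde{J}^{+})\subset J^{+}$, only $f^{-1}(\widetilde{J}^{+})\subset K^{+}$, and this weaker containment follows without any Lipschitz input. Set $S^{+}=\{(f(x),f(y))\mid (x,y)\in K^{+}\}$; transitivity and closedness are formal once $f$ is a bijective homeomorphism, and for $\widetilde{J}^{+}\subset S^{+}$ one argues: if $f(y)\in \widetilde{J}^{+}(f(x))\subset\overline{\widetilde{I}^{+}(f(x))}$ (Lemma \ref{lemma:cl} in $Y$), pick $q_{n}\in \widetilde{I}^{+}(f(x))$ with $q_{n}\to f(y)$; then $f^{-1}(q_{n})\in I^{+}(x)$ by Lemma \ref{ftime} and $f^{-1}(q_{n})\to y$ by continuity of $f^{-1}$, so $y\in\overline{I^{+}(x)}=\overline{J^{+}(x)}\subset K^{+}(x)$. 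Minimality of $\widetilde{K}^{+}$ then gives $\widetilde{K}^{+}\subset S^{+}$, and antisymmetry of $S^{+}$ is immediate from that of $K^{+}$ and injectivity. Note that this argument never uses the locally causally Lipschitz hypothesis or Proposition \ref{fcausalcurv} at all; the approximation through chronological futures is what lets you land in $\overline{J^{+}(x)}\subset K^{+}(x)$ instead of in $J^{+}(x)$.
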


\begin{proof}
We start by defining a new relation $S^+$ in $Y \times Y$. We will show that $S^+$ is transitive, closed and contains $J^+$, and thus $K^+ \subset S^+$. Let $S^+$ be defined as
\begin{equation*}
S^+ = \{ (f(x),f(y)) \, | \ y \in K^+(x) \}.
\end{equation*}

To show that this relation is transitive, consider $(f(x),f(y))$, $(f(y),f(z)) \in S^+$. Then, $y \in K^+(x)$ and $z \in K^+(y)$. Since $K^+$ in $X$ is transitive, we have $z \in K^+(x)$, and hence $(f(x),f(z)) \in S^+$. 

Now take $(f(x),f(y)) \in J^+$, that is, $f(y) \in J^+(f(x)) \subset \overline{J^+(f(x))}=\overline{I^+(f(x))}$. Let $\{ q_n \} \subset I^+(f(x))$ be a sequence such that $q_n \to f(y)$. Since $f^{-1}$ is a homothetic distance map and a homeomorphism, we have that $x \ll f^{-1}(q_n)$ and $f^{-1}(q_n) \to  y$, and thus $y \in \overline{I^+(x)} = \overline{J^+(x)} \subset K^+(x)$. Therefore, $(f(x),f(y)) \in S^+$.

To show that $S^+$ is closed, let $(f(x),f(y)) \in \overline{S^+}$. Then, there exists a sequence $(f(x_n),f(y_n)) \in S^+$ such that $(f(x_n),f(y_n)) \to (f(x),f(y))$. Since $(f(x_n),f(y_n)) \in S^+$ then $y_n \in K^+(x_n)$, which is equivalent to $(x_n,y_n) \in K^+$. Using the fact that $K^+$ is closed, we conclude that $(x,y) \in K^+$, which in turn implies that $y \in K^+(x)$, that is, $(f(x),f(y)) \in S^+$.

Since $K^+$ is the smallest transitive, closed relation that contains $J^+$, we know that $K^+ \subset S^+$. Note that this implies that $K^+(f(x)) \subset f(K^+(x))$ for all $x \in X$.

In order to show that $K^+$ (in $Y$) is antisymmetric and thus $\LLSY$ is stably causal, let $(f(x),f(y))$ and $(f(y),f(x))$ in $K^+$ (in $Y$). We want to show that $f(x)=f(y)$. Recall that $K^+ \subset S^+$, then it follows that $y \in K^+(x)$ and $x \in K^+(y)$, and therefore $x=y$ since $\LLS$ is stably causal. Thus, $f(x)=f(y)$ and $K^{+}$ is antisymmetric.

\end{proof}

As a final result in this section we have that under some mild assumptions distance homothetic maps send maximal causal curves in $X$ to maximal causal curves in $Y$.

\begin{prop}
Let $\LLS$ and $\LLSY$ be two Lorentzian length spaces and $f$ a surjective homothetic distance map that is locally causally Lipschitz. If $\LLS$ is strongly causal and $(Y,\widetilde{d})$ is locally compact, then $f$ sends maximal causal curves in $X$ to maximal causal curves in $Y$. 
\end{prop}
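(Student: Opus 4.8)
The plan is to reduce the statement to two facts already established: that $f$ carries causal curves to causal curves (Proposition \ref{fcausalcurv}), and that the homothety constant $c$ rescales $\tau$-lengths. Recall that a future directed causal curve $\gamma\colon[a,b]\to X$ from $p$ to $q$ is \emph{maximal} precisely when $L_\tau(\gamma)=\tau(p,q)$, i.e.\ it realizes the time separation between its endpoints. So, given such a $\gamma$, I would first invoke Proposition \ref{fcausalcurv} to conclude that $f\circ\gamma\colon[a,b]\to Y$ is again a future directed causal curve, now joining $f(p)$ to $f(q)$. This is legitimate since all the hypotheses of that proposition---namely $\LLS$ strongly causal, $(Y,\widetilde{d})$ locally compact, and $f$ surjective, distance homothetic and locally causally Lipschitz---are exactly the standing assumptions here. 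I would also note in passing that $f\circ\gamma$ is non-constant, since $f$ is injective (Theorem \ref{stronglythm}) and $\gamma$ is non-constant, so that $f\circ\gamma$ is a bona fide causal curve for which maximality is defined.

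The core of the argument is a length-scaling identity. For every partition $a=t_0<\dots<t_N=b$, the homothety property gives $\widetilde{\tau}(f(\gamma(t_i)),f(\gamma(t_{i+1})))=c\,\tau(\gamma(t_i),\gamma(t_{i+1}))$ for each $i$, so the finite sum defining one term of $L_{\widetilde{\tau}}(f\circ\gamma)$ equals exactly $c$ times the corresponding sum for $L_\tau(\gamma)$. Since $c>0$ is a fixed constant, it factors through the infimum over all partitions (with the usual convention $c\cdot\infty=\infty$ covering the case of infinite separation), yielding
\[
L_{\widetilde{\tau}}(f\circ\gamma)=c\,L_\tau(\gamma).
\]
Because $\gamma$ and $f\circ\gamma$ share the parameter interval $[a,b]$, no reparametrization issue arises and this identity is purely algebraic.

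With this identity in hand the conclusion is immediate. Assuming $\gamma$ is maximal, we have $L_\tau(\gamma)=\tau(p,q)$, and applying the homothety property once more at the endpoints gives $\widetilde{\tau}(f(p),f(q))=c\,\tau(p,q)$. Combining these,
\[
L_{\widetilde{\tau}}(f\circ\gamma)=c\,L_\tau(\gamma)=c\,\tau(p,q)=\widetilde{\tau}(f(p),f(q)),
\]
so $f\circ\gamma$ realizes the time separation between $f(p)$ and $f(q)$ and is therefore maximal in $Y$. I do not anticipate any genuine obstacle in this argument: the substantive work---verifying that the image of a causal curve is again causal, which relies on strong causality of $Y$ and the locally causally Lipschitz condition---has already been discharged in Proposition \ref{fcausalcurv}, and what remains here is only the rescaling of $\tau$-lengths by the constant $c$ together with one final application of the defining homothety relation at the endpoints.
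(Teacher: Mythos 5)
Your proof is correct and follows essentially the same route as the paper's: apply Proposition \ref{fcausalcurv} to see that $f\circ\gamma$ is causal, then use the homothety relation on each partition sum to get $L_{\widetilde{\tau}}(f\circ\gamma)=c\,L_\tau(\gamma)$ and conclude via $\widetilde{\tau}(f(p),f(q))=c\,\tau(p,q)$. The extra remark on non-constancy of $f\circ\gamma$ is a harmless (and welcome) addition.
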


\begin{proof}
Let $x,y \in X$ with $x \leq y$ and let $\gamma:[a,b] \rightarrow X$ be a maximal future directed causal curve between these points, i.e., $L_{\tau}(\gamma)=\tau(x,y)$.
We want to show that $f \circ \gamma :[a,b] \rightarrow Y$ is also a maximal future directed causal curve. First, note that $f \circ \gamma:[a,b] \rightarrow Y$ is also a future directed causal curve by Proposition \ref{fcausalcurv}. Now, let us prove that $L_{\widetilde{\tau}}(f \circ \gamma)=\widetilde{\tau}(f(x),f(y))$. Recall that by definition,
\[
L_{\widetilde{\tau}}(f \circ \gamma)=\inf \{\sum_{i=0}^{N-1} \widetilde{\tau}(f(\gamma(t_i)),f(\gamma(t_{i+1}))) \mid a=t_0 < t_1 < ... < t_{N}=b\}.
\] 

Since $f$ is a homothetic distance map and $\gamma$ is maximal causal curve we have that

\[
L_{\widetilde{\tau}}(f \circ \gamma)=\inf \{\sum_{i=0}^{N-1} c\tau(\gamma(t_i),\gamma(t_{i+1}))  \mid a=t_0 < t_1 < ... < t_{N}=b\}=c L_{\tau}(\gamma).
\]

Note that $\gamma$ is a maximal future directed causal curve, so $L_{\tau}(\gamma)=\tau(x,y)$. Therefore, $L_{\widetilde{\tau}}(f \circ \gamma)=c \tau(x,y)=\widetilde{\tau}(f(x),f(y))$ and this proves that $f \circ \gamma$ is a maximal causal curve in $\LLSY$.

\end{proof}

\section*{Acknowledgments}
L. Ak\'e was partially supported by FORDECyT, under grant no. 265667. AJ Cabrera Pacheco is grateful to the Carl Zeiss Foundation for its generous support. D. A. Solis acknowledges the kind hospitality of University of T\"ubingen and CIMAT-M\'erida, where part of this work was developed and the financial support of UADY under grant PFCE-2019-12.

\bibliographystyle{amsplain}
\bibliography{LLS-bib}
\vfill

\end{document}